\algnewcommand{\LineComment}[1]{\State \(\triangleright\) #1}
\newtheorem{observation}{Observation}
\newtheorem{definition}{Definition}
\newtheorem{Lemma}{Lemma}
\newtheorem{theorem}{Theorem}
\newif\ifshowcomments
\newcommand{\mynote}[2]{\fbox{\bfseries\sffamily\scriptsize{#1}}
 {\small$\blacktriangleright$\textsf{\emph{#2}}$\blacktriangleleft$}}
\newcommand{\mynote}[2]{}
\begin{document}

\title{Synchronous Byzantine Lattice Agreement in ${\cal O}(\log (f))$ Rounds}
\author{Giuseppe Antonio Di Luna$^{\,\,\dag}$
\and
Emmanuelle Anceaume$^{*}$
\and
Silvia Bonomi$^{\,\,\dag}$
\and
Leonardo Querzoni$^{\,\,\dag}$}
\institution{${\dag}$: Sapienza University of Rome - {\{diluna,bonomi,querzoni\}@diag.uniroma1.it}
\and
${*}$: CNRS, Univ Rennes, Inria, IRISA - emmanuelle.anceaume@irisa.fr}

\maketitle

\hrule
\begin{abstract}
In the Lattice Agreement (LA) problem, originally proposed by Attiya et al. \cite{Attiya:1995}, a set of processes has to decide on a chain of a lattice. More precisely, each correct process proposes an element $e$ of a certain join-semi lattice $L$ and it has to decide on a value that contains $e$. Moreover, any pair $p_i,p_j$ of correct processes has to decide two values $dec_i$ and $dec_j$ that are comparable (e.g., $dec_i \leq dec_j$ or $dec_j < dec_i$).  LA has been studied for its practical applications, as example it can be used to implement a snapshot objects \cite{Attiya:1995}  or a replicated state machine with commutative operations \cite{Faleiro:2012}.  
Interestingly, the study of the Byzantine Lattice Agreement started only recently, and it has been mainly devoted to asynchronous systems.  The synchronous case has been object of a recent pre-print \cite{Zheng:aa} where Zheng et al. propose an algorithm terminating in  ${\cal O}(\sqrt f)$ rounds and tolerating $f < \lceil n/3 \rceil$ Byzantine processes. 

In this paper we present new contributions for the synchronous case. We investigate the problem in the usual message passing model for a system of $n$ processes with distinct unique IDs. We first  prove that, when only authenticated channels are available, the problem cannot be solved if $f=n/3$ or more processes are Byzantine. We then propose a novel algorithm that works in a synchronous system model with signatures (i.e., the {\em authenticated message} model), tolerates up to $f$ byzantine failures (where $f<n/3$) and that terminates in ${\cal O}(\log f)$ rounds.  We discuss how to remove authenticated messages at the price of algorithm resiliency ($f < n/4$). Finally, we present a transformer that converts any synchronous LA algorithm to an algorithm for synchronous Generalised Lattice Agreement. 
\end{abstract}
\hrule

\section{Introduction}

Fault-tolerance is a key research topic in distributed computing \cite{10.5555/2821576}: reliable algorithms have been deeply investigated in classic message passing \cite{doi:10.1002/0471478210} and in newer model of computations \cite{popprot,icdcn,michail,doty}. A special mention has to be given to algorithms for distributed agreement \cite{105555,6812814} . They represent a cornerstone of todays cloud-based services. In particular, practical and efficient implementations of distributed consensus, transformed in the last 10 years the Internet from a large computers network in a world-scale service platform. Despite its fundamental role, real implementations of distributed consensus are still confined to small scales or tightly controlled environments; the well known FLP result, in fact, makes distributed consensus impossible to solve deterministically in asynchronous settings, where communication latencies cannot be bounded. To cope with this limit, practical systems trade off consistency criteria (allowing weaker \emph{agreement} properties) with liveness (guaranteeing run termination only in long-enough grace periods where the systems ``behaves'' like a synchronous one).

For such a reason, agreement properties weaker than consensus proved to be extremely effective for the implementation of a broad family of distributed applications, since they can be used in systems where consensus cannot be solved, or they can be faster than consensus algorithms circumventing time-complexity lower bounds.

\noindent {\bf Lattice Agreement} \textemdash~ In this paper we investigate an agreement problem that is weaker than consensus: the Lattice Agreement (LA) problem. In LA, introduced by Attiya  et al.~\cite{Attiya:1995}, each process $p_i$ has  an input value $x_i$ drawn from the join semilattice  and must decide an output value $y_i$, such that \emph{(i)} $y_i$ is the join of $x_i$  and some set of input values and \emph{(ii)} all output values are comparable to each other in the lattice, that is they are all located on a single chain in the lattice (see Figure \ref{slattice}). LA describes situations in which processes need to obtain some knowledge on the global execution of the system, for example a global photography of the system. 
In particular Attiya  et al.~\cite{Attiya:1995} have shown that in the asynchronous shared memory computational model, implementing a snapshot object is equivalent to solving the Lattice Agreement problem.

Faleiro  et al.~\cite{Faleiro:2012} have shown that in a message passing system a majority of correct processes and reliable communication channels are sufficient to solve LA in  asynchronous systems, proposing a Replicated State Machine with commutative updates built on top of a generalized variant of their LA algorithm. Generalized Lattice Agreement (GLA) is a version of LA where processes propose and decide on a, possibly infinite, sequence of values. A recent solution of Skrzypczak  et al.~\cite{Skrzypczak} considerably improves Faleiro's construction in terms of memory consumption, at the expense of liveness. 

The restriction of having only commutative updates is justified by the possibility of developing faster algorithms. It is well known \cite{ doi:10.1002/0471478210} that consensus cannot be solved in synchronous systems in less than ${\cal O}(f)$ rounds, even when only crash failures are considered, on the other hand it has been shown \cite{Garg:2018} that, when crash failures are considered, Lattice Agreement can be solved in ${\cal O}(\log f)$ rounds \footnote{Actually, it can be solved faster than $\log(f)$ on specific lattices, the ones having height less than $\log f$ \cite{Garg:2018}, however in this paper we are interested only in worst case performances.}, in ${\cal O}(\log f)$ message delays in asynchronous \cite{Garg:2018b}. 

\noindent {\bf Byzantine Failures}  \textemdash~ All these papers consider process failures, but assume that such failures are non-malicious. More recently, some works started proposing LA algorithms that tolerate Byzantine failures. 
The first one has been by Nowak and Rybicki~\cite{Nowak:2019}: they introduced LA with Byzantine failures and proposed a definition of Byzantine LA in which decisions of correct processes are not allowed to contain values proposed by Byzantine processes. Using such definition, authors have shown that the number of correct processes needed to solve LA depends on the structure of the lattice, and they have shown a LA algorithm for specific kinds of lattices. 

Successively, Di Luna et al. \cite{ipdps} proposed a less restrictive definitions of Byzantine LA, in which correct processes can decide also values proposed by Byzantine. Authors have shown that, when their definition is used, LA can be solved for any possible lattice when $f < \frac{n}{3}$: they proposed a solution for Byzantine LA in asynchronous systems that terminates in $\mathcal O(f)$ rounds; the same paper also proposed a Generalized version of the algorithm and built on top of it a Replicated State Machine that executes commutative operations. In this paper we adopt the Byzantine LA definition from \cite{ipdps}, since it allows to circumvent the impossibility of \cite{Nowak:2019} and it is usable in many practical scenario.  The same definition has been also used by Zheng and Garg~\cite{Zheng:aa}, where they show that LA can be solved in synchronous systems with $\mathcal O(\sqrt f)$ rounds also in presence of Byzantine failures. 

\begin{figure}
  \centering
    \includegraphics[width=0.5\textwidth]{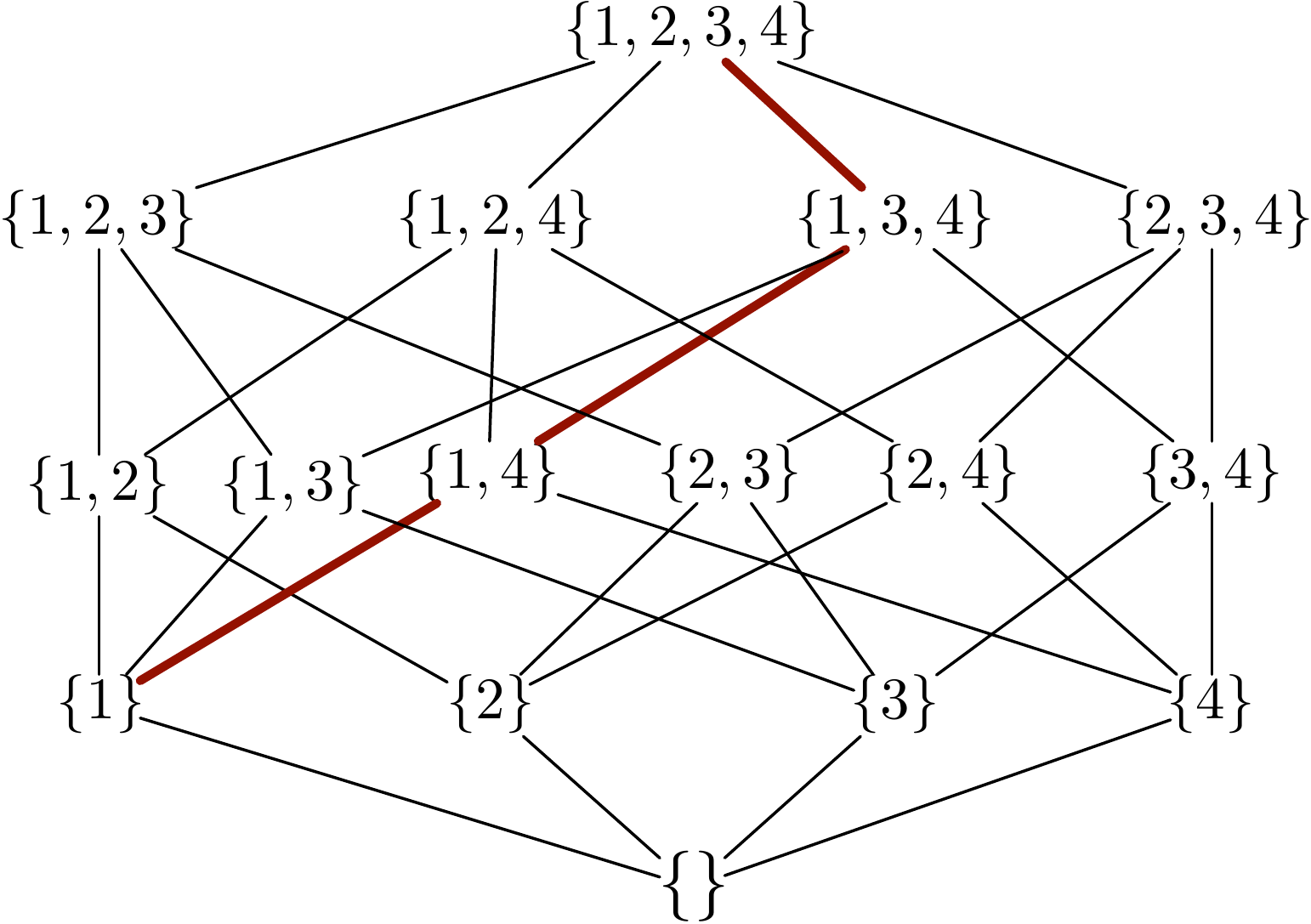}
      \caption{Hasse diagram of the semilattice induced over the power set of $\{1,2,3,4\}$ using the union operation as join. Given any two elements $e,e'$ of the lattice if $e < e'$, then $e$ appears lower in the diagram than $e'$ and there is an ``upward" path, going from lower points to upper points, connecting $e$ to $e'$ (e.g., $\{1\} \leq \{1,3,4\}$, but $\{2\} \not\leq \{3\}$). Any two elements $e,e'$ of the semilattice have a join $e \oplus e'= e \cup e'$ and $e \oplus e' \geq e,e'$ (e.g., $\{1\} \oplus \{2,3\}=\{1,2,3\}$ ). The red edges indicate a possible chain (i.e., sequence of increasing values) that contains the output values. 
	  }\label{slattice}
\end{figure}

\noindent {\bf Contributions}  \textemdash~In this paper we present new contributions for the Byzantine LA problem in synchronous settings.
Our first results is for systems with only authenticated channels (i.e., signatures are not available), in such systems we show that Byzantine LA on arbitrary lattices cannot be solved, in synchronous systems, with $f=\lceil n/3 \rceil$ or more faulty processes (Section \ref{authchannel:impossibility}).
Interestingly, such proof shows that the algorithm of Zheng and Garg~\cite{Zheng:aa} is tight in the number of tolerable failures. 
On the positive side we show algorithms that solve LA and Generalized LA, with and without signatures, having better running time that the state-of-the-art. 

Looking at the model with signatures, we show a novel algorithm for LA that works in a synchronous system model, tolerates up to $f$ byzantine failures (where $f<\lceil n/3 \rceil$) and that terminates in ${\cal O}(\log f)$ rounds. The algorithm improves over the $LA_\beta$ algorithm from Garg at al.~\cite{Garg:2018} by using a similar construction, but adding tolerance to Byzantine failures. We make use of a modified Gradecast algorithm that allows processes to prove that a message has been seen by all correct processes in the system. (Sections \ref{signatures}-\ref{sec:general})

We conclude our investigation on LA by briefly discussing how to remove signatures and make our construction work only with authenticated channels trading-off part of its resiliency: we are able to tolerate only $f<\lceil n/4 \rceil$ failures (Section \ref{sec:nosig}).
  
In the last part of the manuscript, we devote our attention to  Generalized Lattice Agreement (Section \ref{bzgc:const}). Specifically, we show a transformer that using as building block a generic LA algorithm creates a Generalized Lattice Agreement algorithm. At the best of our knowledge this is the first time GLA is investigated in synchronous systems. 

\subsection{Related Work} 
\label{sec:related_work}

\renewcommand{\arraystretch}{1.2}
 \begin{table*}[t]
 \footnotesize
 \caption{Comparison of algorithms for Byzantine Lattice Agreement when the lattice has an height that is greater than $f$. \label{table:comparison}}
\begin{tabularx}{\textwidth}{ | c | c | X | X | X | X | }
 \hline
 	{\em Model} & {\em Assumption} & {\em Paper} &  {\em Resiliency}  & {\em Time} & {\em Messages}  \\
 \hline
 \hline

 	\multirow{4}{*}{({\sf SYNC})} & \multirow{2}{*}{Auth. Links (No Sign.)}  & This paper &  $f<\lceil n/4 \rceil$ & ${\cal O}(\log(f))$ & ${\cal O}(n^2 \log(f))$ \\
	   \cline{3-6}
 &  & Zheng et al. \cite{Zheng:aa} &  $f <\lceil n/3 \rceil$ &${\cal O}(\sqrt{f})$  &  ${\cal O}(n^2 \sqrt{f})$ \\

 	 \cline{2-6}
	& \multirow{1}{*}{Auth. Messages (Signatures)}  & This paper &$f< \lceil n/3 \rceil$ & ${\cal O}(\log(f))$ &  ${\cal O}(n^2 \log(f))$ \\
 	 \hline
	  \hline
	 \multirow{2}{*}{({\sf ASYNC})}& Auth. Links (No Sign.)&    Di Luna et al. \cite{ipdps} & $f< \lceil n/3 \rceil$ &  $   {\cal O}(f) $ & ${\cal O}(n^2)$ \\
           		 \cline{2-6}
	 & Auth. Messages (Signatures) &  Di Luna et al. \cite{ipdps} & $f< \lceil n/3 \rceil$ &  $   {\cal O}(f) $ & $ {\cal O}(f \cdot n)$\\

 \hline
\end{tabularx}

\end{table*}
A map of the relevant related work is illustrated in Table \ref{table:comparison}, where we compare Byzantine fault-tolerant algorithms using the definition of LA from \cite{ipdps}. 
Zheng and Garg~\cite{Zheng:aa} showed that Byzantine LA can be solved in synchronous settings with $\mathcal{O}(\sqrt{f})$ rounds. The algorithm they propose makes use of a modified version of the Gradecast~\cite{Ben-or:2010} algorithm as a building block. Furthermore, correct processes are asked to keep track of a lattice of safe values among which final values will be decided. This approach guarantees that Byzantine processes cannot pollute correct process decisions with values that are not safe, i.e. values that are known by every correct process. When signatures are available, the algorithm we propose matches the same resiliency $f<\lceil n/3 \rceil$ (that we show being optimal) while having a faster running time. When we remove signatures our construction has a worse resiliency $f<\lceil n/4 \rceil$ but it keeps a faster running time. 
Di Luna et al.~\cite{ipdps} proposed a solution for Byzantine LA in asynchronous settings, also providing an algorithms for its generalized variant. Their algorithm bases its correctness on an initial phase were values to be proposed are broadcasted to build a safe set of values in the lattice from which the final decided values will be chosen. Byzantine processes that try to propose arbitrary values that are not contained in the safe set will see their message simply ignored.

\section{System Model, Notation, and Preliminaries}

We use the usual message passing models with unique identifiers (IDs). There is a set $\Pi$ of $n$ processes with unique IDs in $\{1,\ldots,n\}$ connected by a complete communication graph.  The system is synchronous, and the execution of the algorithm can be divided in discrete finite time units called rounds. In each round a process is able to send messages to its neighbours ({\em send phase}), and receive all messages sent to it at the beginning of the round ({\em receive phase}). Processes in $\Pi$ are partitioned in two sets $F$ and $C$. Processes in $C$ are correct, they faithfully follow the distributed protocol. Processes in $F$ are Byzantine, they arbitrarily deviate from the protocol. As usual when Byzantine failures are considered, we assume that the communication channels are authenticated by mean of Message Authentication Codes (MAC).
The {\em authenticated channels} are the only assumption used in Section~\ref{authchannel:impossibility}. In Section~\ref{signatures} we assume that there is a public key infrastructure that allows processes to cryptographically sign messages, that can be lately verified by other processes. This model has {\em authenticated messages}. Byzantine processes are polynomially bounded and cannot forge signatures of correct processes.
For an easier presentation we explain our algorithms for the case of $n=3f+1$, where $f=|F|$, however they can be easily adapted for any other $n> 3f+1$. 
\paragraph{Notation.}
With $\epsilon$ we indicate the empty string. Given a string $G$, with $|G|$ we indicate the length of the string ($|\epsilon|=0$), with $G[j]$ and $0 \leq j < |G|$ we indicate the character of string $G$ in position $j$. With $G[k:l]$ (given $0 \leq k \leq l \leq |G|$), we indicate the substring of $G$ between position $k$ and $l$.
As an example given $G=ssms$, we have $G[0]=s$ and $G[0:1]=ss$.
 Given two strings $a$ and $b$ with $a \cdot b$ we indicate the string obtained by concatenating $b$ after $a$. 

\subsection{The Byzantine Lattice Agreement Problem}  \label{definition:la}
  Each process $p_i \in C$ starts with an initial input value $pro_i \in E$ with $E \subseteq V$ (set $E$ is a set of allowed proposal values). 
Values in $V$ form a join semi-lattice $L=(V,\oplus)$ for some commutative join operation $\oplus$: for each $u,v \in V$ we have $u \leq v$ if and only if $v= u \oplus v$. 
 Given  $V'=\{v_1,v_2,\ldots,v_k\} \subseteq V$ we have $\bigoplus V'=v_1 \oplus v_2 \oplus \ldots \oplus v_k$.

  The task that processes in $C$ want to solve is the one of  Lattice Agreement, and it is formalised by the following properties:
  \begin{itemize}
\item {\bf Liveness:} Each process $p_i \in C$ eventually outputs a decision value $dec_i \in V$;
\item {\bf Stability:} Each process $p_i \in C$  outputs a unique decision value $dec_i \in V$;
\item {\bf Comparability:} Given any two pairs $p_i,p_j \in C$ we have that either $dec_i \leq dec_j$ or $dec_j \leq dec_i$;
\item {\bf Inclusivity:} Given any correct process $p_i \in C$ we have that $pro_i \leq dec_i $;
\item {\bf Non-Triviality:}  Given any correct process $p_i \in C$ we have that $dec_i \leq \bigoplus (X \cup B)$, where $X$ is the set of proposed values of all correct processes ($X:\{pro_i | \text{ with } p_i \in C \}$), and $B \subseteq E$ is $|B| \leq f$.

  \end{itemize}
  
 {\bf  Lattice definitions.} Given two distinct elements $u,v \in V$ ve say that there exists a path between $u$ and $v$ if they are comparable; a path of length $k$ between $u$ and $v$ is a sequence of $k+1$ distinct elements $(e_0,e_2,\ldots,e_k)$ of the lattice such that
 $e_0=u \leq e_1 \leq e_2 \leq \ldots \leq e_{k-1} \leq e_k=v$. As an example the path between $\{1,2,3\}$ and $\{1\}$ in the lattice of Figure \ref{slattice} has length 2.
 We say that a  $v \in V$ is minimal if it does not exists $u \in V$, with $u \neq v$, such that $u \oplus v= v$ (i.e., it does not exists an $u \leq v$).  As in \cite{Zheng:aa} we define the 
 height of an element $v$ in a lattice $(V, \oplus)$ has the length of the longest path from any minimal element to $v$ in the lattice (as an example the Lattice in Figure \ref{slattice} has height $4$).  A sub-lattice of $(V,\oplus)$ is a subset $U$ of $V$ closed with respect the join operation, the definition of height for a sub-lattice does not change.   
  
   {\bf  Preliminaries.}  In the rest of the paper we will assume that $L$ is a semi-lattice over sets ($V$ is a set of sets) and $\oplus$ is the set union operation. This is not restrictive, it is well known~\cite{Nation} that any join semi-lattice is isomorphic to a semi-lattice of sets with  set union  as join operation. An important  lattice is the one on the power set of the first $\{1,\ldots,n\}$ natural numbers with the union as join operation (see Figure \ref{slattice}), we will use as shorthand for such lattice the notation $L_{n}$. Such lattice is interesting for our purpose, we will show that an algorithm solving lattice agreement exclusively on such lattice (the GAC of Section \ref{signatures}) can be used as building block to solve lattice agreement on an arbitrary lattice (Section \ref{sec:general}). 
   When $L_{n}$ is considered, the height of an element $e$ is equal to its cardinality (i.e, $|e|$ cfr. Figure \ref{slattice}), given a sub lattice of $L_{n}$ its height is upper bounded by the difference between the minimum and maximum cardinality of its elements.

\section{Authenticated Channels and No Signature - Necessity of $3f+1$ processes in synchronous systems}  \label{authchannel:impossibility}
In the following we show that $3f+1$ processes are necessary when there are no signatures.

\begin{Lemma}\label{lemmabase}
It does not exist any algorithm solving Byzantine LA on arbitrary lattices in a synchronous system with $3$ processes when one is faulty. The impossibility holds even when relaxing the Non-Triviality allowing $|B| \leq k$ for any fixed $k$.  
\end{Lemma}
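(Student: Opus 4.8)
The plan is to argue by contradiction through an \emph{indistinguishability} (scenario) argument, the standard route for $3f+1$-type bounds when signatures are unavailable. Assume some algorithm $\mathcal{A}$ solves Byzantine LA for $\Pi=\{p_1,p_2,p_3\}$ tolerating one fault, even under the $k$-relaxed Non-Triviality. I would instantiate the problem on the lattice $L_{n}$ and, crucially, restrict the set of allowed proposals $E$ to the singletons $\{1\},\{2\},\ldots$, which form an antichain of minimal elements. This choice turns the relaxed Non-Triviality into a \emph{hard cardinality bound}: in any execution at most two processes are correct, so $\bigoplus(X\cup B)$ is the union of at most $2+k$ singletons, and therefore every correct decision contains at most $2+k$ atoms. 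Inclusivity additionally forces each correct process to put its own atom into its decision, while Comparability forces the two correct decisions of any single execution to be nested.

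The heart of the construction is to realise executions that swap which process is Byzantine, glued together so that the local views are globally consistent. Concretely, I would build the usual \emph{hexagon}: six virtual nodes $q_0,\ldots,q_5$ arranged in a cycle, where $q_i$ runs the code of $p_{(i\bmod 3)+1}$ and each $q_i$ talks only to $q_{i-1}$ and $q_{i+1}$, so that each node sees exactly one node of each of the two other roles, as in the real triangle. Fixing one synchronous run of this cycle determines a decision $d_i$ for every node. For each cyclically adjacent pair $(q_i,q_{i+1})$ one exhibits a genuine three-process execution in which those two roles are correct and the third role is Byzantine, faithfully replaying toward each correct process exactly the messages its cut-off ring neighbour sent; the two correct processes then have the very same views as $q_i,q_{i+1}$ and decide $d_i,d_{i+1}$, which Comparability forces to be nested. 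Assigning distinct atoms $\{a_i\}$ as inputs, the strict case $k=0$ closes immediately: the two-sided cardinality bound pins $d_i=\{a_i\}$, and two adjacent singletons are incomparable, contradicting Comparability.

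The hard part is the \emph{relaxed} Non-Triviality: once $|B|\le k$ with $k\ge 1$, the budget lets a correct decision legitimately absorb a neighbour's atom, so the decisions around a single hexagon are no longer pinned and the bare adjacency constraints become satisfiable (a zig-zag of singletons alternating with their neighbour-unions works). The main obstacle is therefore to amplify the construction so that the induced constraints are \emph{jointly} unsatisfiable for every fixed $k$. The tension to overcome is intrinsic to $n=3$: since any two correct processes communicate directly and synchronously, one cannot vary the input of one correct process while freezing the view, hence the decision, of another, so one cannot simply confront a single fixed decision with more than $k$ pairwise-incomparable values. My plan is to scale both the antichain and the scenario graph with $k$, chaining several hexagon-like gadgets on an antichain of size larger than $2+k$ so that Comparability, Inclusivity, and the $(2+k)$-atom cardinality bound together force some correct decision to contain more than $2+k$ atoms, which is the contradiction. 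Verifying that the glued executions remain pairwise indistinguishable to the relevant correct processes, and that the cardinality overflow is genuinely forced rather than avoidable by a zig-zag, is the delicate step and the crux of the whole proof.
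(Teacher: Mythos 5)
Your setup is faithful to the paper's own proof of this lemma---the ring of simulated triangles, one genuine three-process execution per edge, distinct singleton inputs, and the constraints extracted from Inclusivity, Comparability and relaxed Non-Triviality are exactly the ingredients used there---and your $k=0$ case is correct. But as a proof the proposal has a genuine gap: for every $k\ge 1$ (and the lemma is later invoked with $k=f\ge 1$ in Theorem~\ref{th:impossible}) you stop at a plan. You propose to ``chain several hexagon-like gadgets'' over a larger antichain, and you yourself flag as unresolved precisely the step where the contradiction must come from, namely showing that the glued constraints are \emph{jointly} unsatisfiable rather than escapable by a zig-zag. Nothing in the proposal carries out that step, so the main case of the lemma is unproven.

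That said, the obstacle you isolate is real, and it does more than stall your own argument: your zig-zag refutes the paper's proof as written. The paper's technical Observation---a walk $S_1\leftrightarrow S_2\leftrightarrow\cdots\leftrightarrow S_m$ with $i\in S_i$ forces some $|S_k|\ge m$---is false, because comparability is not transitive: $S_1=\{1,2\}$, $S_2=\{2\}$, $S_3=\{2,3\}$ already kills it (read instead as requiring \emph{all pairs} comparable, the Observation is true but is not what the hexagon walk provides). Concretely, on the paper's hexagon with $k=1$ the assignment $dec_1=\{1\}$, $dec_2=\{1,2,3\}$, $dec_3=\{3\}$, $dec_4=\{3,4,5\}$, $dec_5=\{5\}$, $dec_6=\{1,5,6\}$ satisfies Inclusivity, Comparability on all six edges, and Non-Triviality with $|B|\le 1$ in every triangle, yet no decision reaches cardinality $4$; and once $k\ge2$ a one-defect variant of the same assignment defeats the odd $3(k+1)$-gons as well. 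So the ``amplification'' that is missing from your write-up is also missing from the paper. A repair does exist for $k=1$, but it is a \emph{parity} argument rather than a bigger even gadget: run the scenario on an \emph{odd} ring (say a $9$-gon with the three codes properly assigned around the cycle). With $|B|\le 1$ one checks that each decision must either equal $\{a_{i-1},a_i,a_{i+1}\}$ or contain no neighbouring atom (the intermediate types, such as $\{a_{i-1},a_i\}$, force a neighbour to exceed the Byzantine budget on its other edge), that two adjacent decisions of the same type are incomparable, and hence that the two types must strictly alternate around the cycle---impossible when the cycle is odd. Extending this beyond $k=1$, where the intermediate types become feasible, is exactly the kind of scaling you gesture at, and it remains open both in your proposal and in the paper's own proof.
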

\begin{proof}
We first discuss the case of $k=1$. 
Let ${\cal A}$ be an algorithm solving LA with $3$ processes when one is faulty. Since ${\cal A}$ works on arbitrary lattices it should also work on the lattices induced by the union operation on the power set of the first $6$ natural numbers with $E=\{\{1\},\{2\}, \{3\}, \{4\}, \{5\}, \{6\}\}$.
Now let us consider the hexagonal system of Figure \ref{3fstart}. Such a system is constituted by $6$ processes $p_1,p_2,p_3,p_4,p_5,p_6$ with an edge between each $p_i,p_j$ such that $i=j \pm 1$ and one edge between $p_1$ and $p_6$.  
Each of the six processes has as input an unique value in $[1,6]$, just for simplicity process $p_i$ has input $\{i\}$. 
Note that even if ${\cal A}$ is an algorithm for three processes, it is possible to execute ${\cal A}$ on the hexagon, but its behaviour does not necessarily follows the LA specification.  

On the right of the figure we have $6$ triangles, each triangle is related to a corresponding edge in the hexagon. 
The relationship is such that the view of two neighbour processes in the hexagon is equal to the view of two processes in a triangle where the third process is a Byzantine simulating the behaviour of the other processes in the hexagon. 
As example: the view of processes $p_1,p_2$ in the hexagon  is the same that $p_1,p_2$ would have in triangle $t_{green}$, analogously the view of $p_6,p_1$ in the hexagon is the same view of $p_6,p_1$ in $t_{blue}$.   Note that ${\cal A}$ once executed on any of the triangle in the figure has to follow the  LA specification.  

A run of ${\cal A}$ on the hexagon in principle has an undefined behaviour. However we observe that a run of ${\cal A}$ on the hexagon eventually terminates on each process, this is because each process has a local view that is consistent with a system of 3 processes one of which is a Byzantine. Recall, that the local view of each process $p_i$ in the hexagon is exactly the same view that the process has in the two triangles on the right, and ${\cal A}$  being a correct algorithm when $3$ processes are considered the
algorithm will correctly terminate in each triangle in the right. 

Moreover, each process will output a decision value that must be the same that the process will output in the corresponding triangles. 

Let $dec_1,dec_2,dec_3,dec_4,dec_5,dec_6$ be the decisions of processes dictated by ${\cal A}$ (naturally we have $dec_i$ decision of $p_i$).   As reference see Figure \ref{3fend}. 
The triangles on the right impose a certain number of comparability relationships among these decisions. Recall, that each decision is a subset, not necessarily proper, of $[1,6]$, and that
the comparability in this setting is the relationship of inclusion. An example is triangle $t_{green}$ that imposes the comparability between $dec_1$ and $dec_2$, that is either $dec_1 \subset dec_2$ or $dec_2 \subseteq dec_1$.
In the following we use $ dec_i \leftrightarrow dec_j$ to indicate that $dec_i$ must be comparable with $dec_j$.  
Before continuing with the proof we give the following technical observation.
\begin{observation}
Consider a collection of $m$ sets $S_1,S_2,\ldots,S_m$ such that for each $S_i$ we have $i \in S_i$. If it holds that $S_j \leftrightarrow S_{i+1}$ for all $j \in [1,m-1]$ then there exists an $|S_k| \geq m$.  
 \end{observation}

Therefore, let us take w.l.o.g. $dec_1$, and let us walk in clockwise direction for 3 steps on the hexagon. On this walk we have:
$dec_1 \leftrightarrow dec_2 \leftrightarrow dec_3 \leftrightarrow dec_4$, by the inclusivity property we have for each $dec_i$ that $i \in dec_i$.
Therefore we can apply our technical lemma and state that one of the $dec_i$ has cardinality at least $4$ violating the non-triviality property of ${\cal A}$ on some triangle (recall that $k$ being equal to $1$ we have $|B| \leq 1$).
The generalised proof for an arbitrary $k$ follows the same reasoning using: a lattice on the power set of the first $3(k+1)$ natural numbers, $E=\{ \{1\},\ldots, \{3(k+1)\}\}$ and a $3(k+1)-$gon instead of an hexagon. We then walk on the $3(k+1)-$gon for $k+2$ steps instead of $3$. 
We will have a chain  $dec_1 \leftrightarrow dec_2 \leftrightarrow \ldots \leftrightarrow dec_{k+3}$ where by inclusivity of ${\cal A}$ on the corresponding triangle we have $i \in dec_i$, and where our 
observation shows that one of the decisions contains at least $k+3$ distinct elements of $E$ violating the non-triviality on some triangle. \end{proof}

\begin{theorem}\label{th:impossible}
It does not exist any algorithm solving Byzantine LA on arbitrary lattices in a synchronous system with $3f$ processes and $f$ faulty if $|B| \leq f$.  The same holds if $n < 3f$
\end{theorem}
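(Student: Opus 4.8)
The plan is to reduce the claim to Lemma~\ref{lemmabase}, exploiting precisely the fact that that lemma was stated for an \emph{arbitrary} non-triviality budget $k$. Assume, for contradiction, that an algorithm ${\cal A}$ solves Byzantine LA on arbitrary lattices for $n \le 3f$ processes tolerating $f$ Byzantine faults with a non-triviality budget $|B| \le f$ (the case $n=3f$ and the case $n<3f$ are handled uniformly; we may assume $n \ge 3$, since for $n \le 2$ impossibility is immediate once $f \ge 1$). Because $n \le 3f$, I can partition the $n$ processes into three non-empty groups $G_1,G_2,G_3$ each of size at most $f$: for $n=3f$ take groups of size exactly $f$, and for $n<3f$ split as evenly as possible, so that the maximum group size is $\lceil n/3\rceil \le f$.

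From ${\cal A}$ I would build an algorithm ${\cal A}'$ for the three-process, one-fault system of Lemma~\ref{lemmabase} by simulation. Each of the three processes $q_1,q_2,q_3$ internally runs a copy of ${\cal A}$ on behalf of all processes in its group $G_j$, feeding its own input $x_j$ as the proposal of every simulated process of $G_j$. Messages exchanged between two simulated processes of the same group are delivered locally inside $q_j$; a message from a process of $G_a$ to a process of $G_b$ with $a\neq b$ is relayed over the (authenticated) channel between $q_a$ and $q_b$, which faithfully reproduces an admissible synchronous execution of ${\cal A}$ in the no-signature model. Finally $q_j$ outputs as its own decision the value decided by a fixed representative $r_j\in G_j$.

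The crux is the fault-budget accounting. If $q_j$ is correct it simulates $|G_j|\le f$ correct processes; if $q_j$ is Byzantine, the (at most one) faulty super-process corresponds to the at most $f$ simulated processes of its group behaving arbitrarily. Hence every run of ${\cal A}'$ with at most one faulty super-process maps to a run of ${\cal A}$ with at most $f$ Byzantine processes, in which all guarantees of ${\cal A}$ hold. From this I would verify the LA properties of ${\cal A}'$: Liveness and Stability follow because the two correct representatives each decide a unique value; Comparability holds because the two correct representatives are correct processes of ${\cal A}$, hence their decisions are comparable; Inclusivity holds because $x_j=pro_{r_j}\le dec_{r_j}=dec_{q_j}$. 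For Non-Triviality, ${\cal A}$ guarantees $dec_{r_j}\le \bigoplus(X\cup B)$ with $|B|\le f$, and the correct simulated proposals satisfy $X\subseteq\{x_a,x_b\}$, where $x_a,x_b$ are the inputs of the two correct super-processes; thus $dec_{q_j}\le\bigoplus(\{x_a,x_b\}\cup B)$, which is exactly Non-Triviality for ${\cal A}'$ with budget $k=f$.

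Therefore ${\cal A}'$ would solve Byzantine LA on arbitrary lattices with three processes, one faulty, and relaxed Non-Triviality $|B|\le f$; instantiating Lemma~\ref{lemmabase} with $k=f$ contradicts its existence, completing the proof. I expect the only delicate step to be the budget accounting together with the Non-Triviality transfer: one must ensure that a single Byzantine super-process never corresponds to more than $f$ faulty simulated processes (which is what forces the group-size bound $\lceil n/3\rceil\le f$, and hence the hypothesis $n\le 3f$), and that the join of the correct proposals in the simulated system collapses to the join of the two correct super-process inputs, so that the extra $f$ values are charged to $B$ rather than to $X$.
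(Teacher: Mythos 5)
Your proof is correct and takes essentially the same route as the paper: partition the $n \le 3f$ processes into three groups of size at most $f$, have three super-processes simulate them so that the single Byzantine super-process corresponds to at most $f$ Byzantine simulated processes, conclude that the resulting three-process algorithm satisfies the relaxed Non-Triviality with budget $k=f$, and invoke Lemma~\ref{lemmabase}; this is exactly the paper's simulation argument (stated there only for $n=3f$, with a one-line remark covering $n<3f$), which you work out in more detail. The one shaky point is your side claim that impossibility is ``immediate'' for $n\le 2$ with $f\ge 1$: for $n=2$, $f=1$ a trivial exchange-and-join algorithm (decide your input joined with the other's value if it lies in $E$) actually satisfies all five properties, since Comparability is vacuous when at most one process is correct --- but this degenerate case is equally untreated by the paper's own proof, which likewise needs three non-empty groups.
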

\begin{proof}
Let ${\cal A}$ be an algorithm that solves $LA$ on a system of $3f$ processes.
We build an algorithm ${\cal A}_{sym}$ for three processes using ${\cal A}$.  In algorithm ${\cal A}_{sym}$ each process simulates $f$ processes of ${\cal A}$. Each of these simulated processes starts with proposal of the real process.
Once a simulated process decides the corresponding real process decides the same set.
It is immediate that ${\cal A}_{sym}$ is a correct lattice agreement algorithm with a relaxed non-triviality assumptions that includes at most $f$ values from a Byzantine. 
This violates Lemma \ref{lemmabase}.
When the number of failures is greater than $n/3$ the same argument holds.
\end{proof}

The proof of Theorem \ref{th:impossible} does not work in a system with authenticated messages (i.e., signatures), it is therefore unkown whether $3f+1$ processes are necessary also in this model.
Interestingly, in \cite{ipdps} it is shown that, when the system is asynchronous, $3f+1$ processes are necessary also when authenticated messages are available.

\begin{figure}
  \centering
    \includegraphics[width=0.8\textwidth]{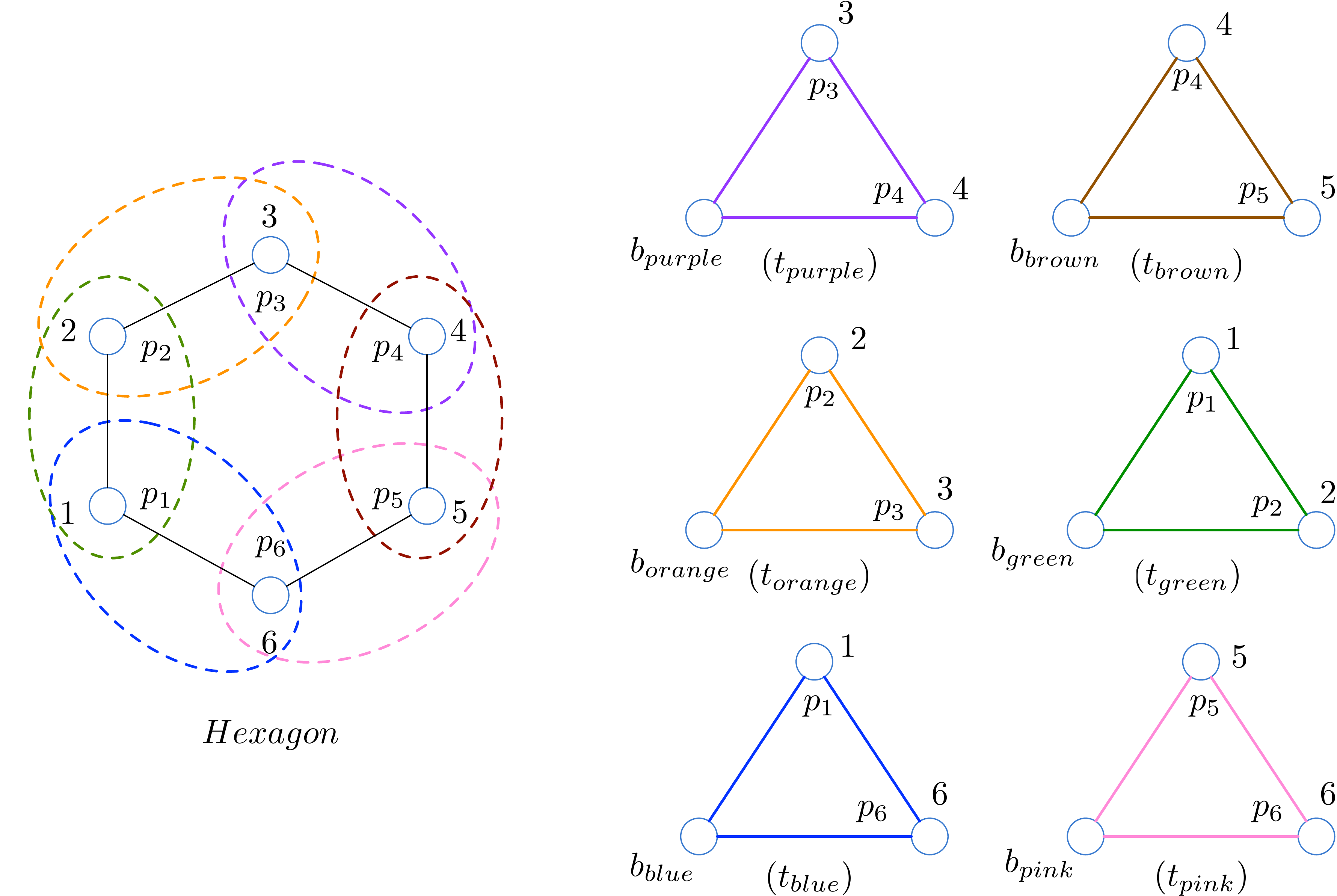}
      \caption{Starting configuration of the Hexagon. For each edge there is a corresponding triangle that is a legal starting configuration of ${\cal A}$ with one Byzantine. As an example, take $p_1,p_2$ in the hexagon. 
      They have the exactly same view of $p_1,p_2$ in $t_{green}$ where the other node is $b_{green}$ a Byzantine that simulates the behaviour of $p_3,p_4,p_5,p_6$ in the hexagon. \label{3fstart} }
\end{figure}

\begin{figure}
  \centering
    \includegraphics[width=0.8\textwidth]{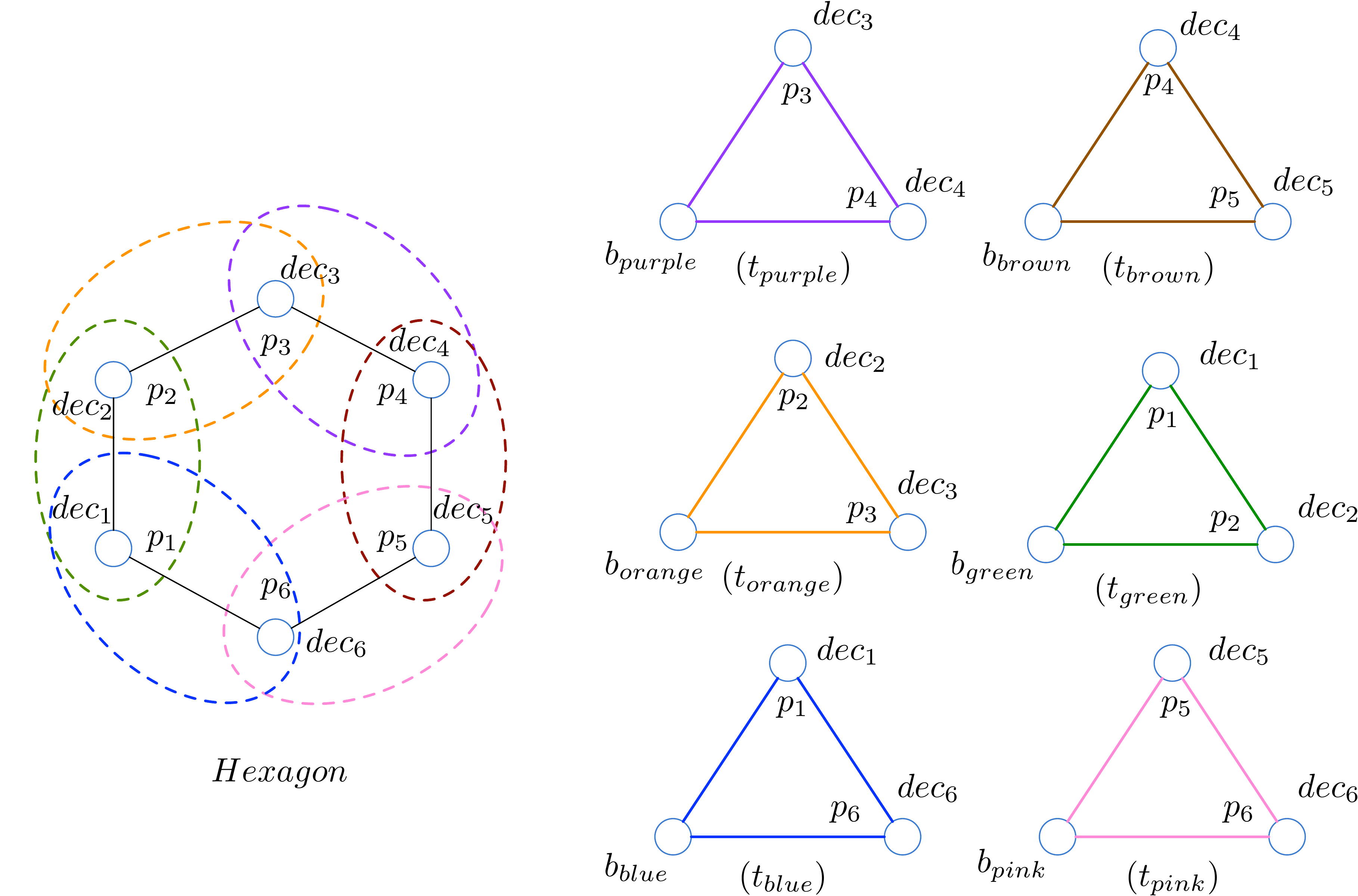}
      \caption{Ending configuration of the Hexagon.  \label{3fend} }
\end{figure}


\section{Algorithm for $L_{n}$: Grade And Classify (GAC)} \label{signatures}
 In this section we show an algorithm that works on $L_{n}$ (note that $n$ is also the number of processes). The algorithm terminates in ${\cal O}(\log(n))$ rounds.  We will then discuss in Section \ref{sec:general} how to use this algorithm so solve LA on arbitrary join semi-lattices, and how to adapt it to work in ${\cal O}(\log(f))$ rounds. 

\paragraph{High Level Description.} Our algorithm is based on the algorithmic framework of Zheng et al.~\cite{Garg:2018} adapted to tolerate Byzantine failures. As in the original, the algorithm works by continuously partitioning processes in \emph{masters} and \emph{slaves} sets. Partitioning is recursively operated in successive epochs.  Processes that have been assigned to the same partition in each epoch are a ``group''. We indicate a generic group $G$ at epoch $ep$ with the string $s \cdot \sigma$, where $\sigma$ is in $\{s,m\}^{ep-1}$. As an example, the string $ssm$ indicates the set of process that at the end of epochs $0$ and $1$ entered the group of slaves (string $ss$) and then, at the end of epoch $2$, entered in the masters group (string $ssm$).  When we write $p \in G$, we indicate that process $p$ belongs to the group of processes identified by string $G$.
 
The algorithm then enforces some properties on the partitions generated at an epoch $ep$ on a Group $G$:
\begin{itemize}
\item each master process in $G\cdot m$ adjusts its proposal to be a superset of each possible decision of a slave process in $G \cdot s$.
\item the height of a sub-lattice in which processes in $G\cdot m$ (or $G\cdot s$) are allowed to decide halves at each epoch. 
\end{itemize}
Thanks to the above properties each group becomes independent and has to solve the lattice agreement on a lattice that has half of the height of the original. 
A key concept for our algorithm is the one of {\em ``admissible value"}, a value is admissible, for a certain epoch, if it is ensured that it will not conflict with decisions of processes that have been elected as masters in a previous epoch. This is done by showing a cryptographical proof that such a value can be accepted by a slave since it is in the proposal value of each master the slave could conflict with. 
After ${\cal O}(\log(n))$  rounds the algorithm terminates (each group operates on a lattice
constituted by a single point). 

\subsection{The Provable Gradecast Primitive}\label{pgc:sec}

The algorithm makes use of the gradecast primitive introduced by Ben-Or et al. in ~\cite{Ben-or:2010}. Such primitive is similar to a broadcast, we have a sender process $p_i$ that sends a message $m$, each other process $p_j$,  after 3 rounds,  outputs a tuple $(p_i,m_j,c_j)$ where $c_j \in \{0,1,2\}$ is a score
of the correctness of $p_i$. The gradecast ensures the following properties:
\begin{itemize}
\item for any two correct processes $p_j,p_\ell$ if $c_j >0$ and $c_\ell >0$ than $m_j=m_\ell$.
\item for any two correct processes $p_j,p_\ell$ we have $|c_j - c_\ell| \leq 1$
\item if the gradecast sender is correct, than for any correct process $p_j$ we have $c_j=2$.  
\item for any correct process $p_j$ if $(p_i,m_j,0)$ then $m_j=\bot$.
\end{itemize}

Intuitively, if we let processes communicate by mean of the gradecast primitive we force Byzantines to send at most two different messages to the set $C$ of correct processes, and one of these messages has to be $\bot$. 
We modify the original gradecast to make it ``provable". In our version of the gradecast each correct process outputs a tuple composed by 4 objects $(p_i,m_j,c_i, S_{i,m_j})$ where $S_{i,m_j}$ is a special object that can either be $\bot$ or a {\em seen-all} proof. In case $S_{i,m_j}$ is different from $\bot$, then it is a cryptographic proof that can be shown to other processes and it implies that, any correct process $p \in C$ has seen a rank, for the gradecast of message $m_j$ from process $p_i$, that is at least $1$. Moreover, we have that if $c_i=2$ then $S_{i,m_j} \neq \bot$. 
Practically, the modification of the original gradecast are contained, and are limited to the second and third round of the algorithm. 
The original gradecast, with source $p_s$ works as follows: in the first round $p_s$ broadcasts a message $m$ to all processes; in the second round each correct process relays the message received by $p_s$ (it ignores messages from other processes); at the end of the second round a correct process selects the most frequent message received, and if such message was received by at least a quorum of $n-f$ processes then it relays the message at the beginning of the third round; at the end of the third round each correct selects the most frequent message received and it ranks it $2$ and delivers it if the message was received by at least $n-f$ processes; if it was received by at least $f+1$ it delivers it and ranks it $1$, otherwise it delivers $\bot$ with rank $0$.

 In our version, see Algorithm \ref{pgc:algorithm}, the relaying process signs the relay sent at the beginning of round 3 (see line \ref{pgc:sign}) 
and a process that sees a message with rank 2 collects the $n-f$ (i.e., $2f+1$ if $n=3f+1$) signed messages (see line \ref{pgc:proof}). These signed messages constitute a proof that the message has been seen by all: delivered by each correct with rank at least 1. Note that the algorithm is for a single instance and a single determined sender, however one can trivially run in parallel an instance for each possible sender in the system. 

We do not prove the properties discussed above for our version of gradecast, they immediately derives from the correctness of the original algorithm \cite{Ben-or:2010}. We do show the property introduced by the seen-all proof:


\begin{algorithm*}
\caption{{\em Provable GradeCast (PGC)}  - Algorithm for sender process $p_s$ and receiver process $p_i$} \label{pgc:algorithm}
\footnotesize

\begin{algorithmic}[1]

\State ${\bf SND~phase~ of~ round~1}$ 
\If{$p_i=p_s$}
\State {\sc Broadcast}$(m)$ \Comment{Executed  only by sender $p_s$}
\EndIf
\LineComment{Code executed by any correct process $p_i \in \Pi$}
\State ${\bf RCV~phase~ of~ round~1}$ 
\State $rcv= {\sc receive\_messages()}$
\State $m$=select one message from $p_s$ in $rcv$ 
\State ${\bf SND~phase~ of~ round~2}$ 
\State {\sc Broadcast}$(m)$
\State ${\bf  RCV~phase~ of~ round~2}$ 
\State $rcv= {\sc receive\_messages()}$
\State $m=$  select the message in $rcv$ received from the largest set of distinct sources. 
\State ${\bf SND~phase~ of~ round~3}$ 
\If{ $m$ has been received from at least $n-f$ sources}  \label{pgc:relay}
\State {\sc Broadcast}($m, {\sc sign}(m)$) \label{pgc:sign} \Comment{$p_i$ relays the most frequent message signed by himself}
\EndIf
\State ${\bf  RCV~phase~ of~ round~3}$ 
\State $rcv= {\sc receive\_messages()}$
\State remove from $rcv$ all messages that are not correctly signed
\State $m_i=$  select the message in $rcv$ received from the largest set of distinct sources.. 
\If{ $m_i$  has been received by at least  $n-f$ sources} \label{pgc:freqr2}
\State $c_i=2$
\State $S_{p_s,m_i}=$ select $n-f$ correctly signed messages in $rcv$ for message $m_i$  \label{pgc:proof}
\ElsIf{ $m_i$  has been received by at least  $f+1$ sources}  \label{pgc:freqr1}
\State $c_i=1$
\State $S_{p_s,m_i}=\bot$
\Else
\State $m_i=\bot$
\State $c_i=0$
\State $S_{p_s,m_i}=\bot$
\EndIf

\State \Return{$<m_i, c_i, S_{p_s,m_i}>$}

\end{algorithmic}
\end{algorithm*}
\begin{observation}\label{obs:byz}
Consider an instance of gradecast with source $p_s$. 
If a process $p_i$, whether Byzantine or correct, can produce a {\em seen-all} proof for a message $m$, then each process $p \in C$ delivered message $m$ with rank at least $1$ at the end of the gradecast instance. 
\end{observation}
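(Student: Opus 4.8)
The plan is to unpack what a \emph{seen-all} proof actually certifies and then combine unforgeability of signatures with a counting argument over the correct processes. By construction (line~\ref{pgc:proof}), a seen-all proof for $m$ is a set of $n-f$ correctly signed round-$3$ relay messages for $m$, and since a signature identifies a unique signer these come from $n-f = 2f+1$ distinct processes. As at most $f$ of them can be Byzantine and Byzantine processes cannot forge correct signatures, at least $f+1$ of the signers are correct. Each such correct signer relayed and signed $m$ at the start of round~$3$ (line~\ref{pgc:sign}), and it did so via {\sc Broadcast}; hence in the synchronous model every correct process receives these $\ge f+1$ validly signed relays of $m$ in its round-$3$ receive phase. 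This already shows that at every correct process $m$ is received, with valid signatures, from at least $f+1$ distinct sources.

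It remains to rule out that some other message $m' \neq m$ could be the one selected (the most frequent) at a correct process, and to conclude rank $\ge 1$. First I would establish that at most one message can be relayed at round~$3$ by a correct process. A correct process relays a message $w$ at round~$3$ only if it received $w$ from at least $n-f$ sources at the end of round~$2$ (line~\ref{pgc:relay}); among those $n-f$ sources at least $f+1$ are correct, so at least $f+1$ correct processes broadcast $w$ at round~$2$. Since each correct process broadcasts exactly one round-$2$ message and there are only $2f+1$ correct processes, two distinct relayed messages would require $2(f+1) > 2f+1$ correct broadcasters, a contradiction. Hence the set of messages relayed by correct processes at round~$3$ is a singleton; because $m$ is relayed by the $\ge f+1$ correct signers identified above, that singleton is exactly $\{m\}$.

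Consequently, any message $m' \neq m$ is relayed at round~$3$ only by Byzantine processes, and therefore carries at most $f$ valid signatures; at every correct process, $m'$ is received from at most $f$ distinct sources, strictly fewer than the $\ge f+1$ sources from which $m$ is received. Thus $m$ is the unique message received from the largest set of distinct sources, it is the message selected at the end of round~$3$, and since it comes from at least $f+1$ sources the receiver assigns it rank $1$ or $2$ (lines~\ref{pgc:freqr2}--\ref{pgc:freqr1}). Every correct process therefore delivers $m$ with rank at least $1$, as claimed.

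The step I expect to be the main obstacle is the uniqueness argument of the second paragraph: the seen-all proof by itself only guarantees that $m$ is \emph{present} (from $\ge f+1$ sources) at each correct process, and one must still exclude a competing message from being selected. The whole point of bounding the correct round-$2$ broadcasters by $2f+1$ --- and thereby forcing at most one correct-relayed message --- is to turn ``present'' into ``selected and ranked $\ge 1$''; getting that counting bound exactly right, and checking that a Byzantine-only message cannot reach the $f+1$ threshold, is the crux.
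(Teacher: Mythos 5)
Your proof is correct and follows essentially the same route as the paper's own argument: extract at least $f+1$ correct signers from the $n-f$ signatures in the seen-all proof, note that their round-3 broadcasts reach every correct process, use the round-2 quorum requirement of line~\ref{pgc:relay} to show that correct processes can relay only a single message at round 3, and conclude that $m$ is the unique most frequent message at every correct process (Byzantines contributing at most $f$ copies of any competitor), hence delivered with rank at least $1$. The only cosmetic differences are that the paper argues by contradiction and phrases the uniqueness step as an implicit quorum-intersection claim, whereas you phrase it as a disjointness count over correct round-2 broadcasters, which is valid under the paper's convention that $|C| = n-f = 2f+1$.
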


\begin{proof}
Let us suppose by contradiction that there exists two processes $p_i$ and $p_j$ such that (i) $p_i$ is able to show a {\em seen-all} proof for a message $m$ and (ii) $p_j$ is correct and it never delivered $m$ with rank $1$ or more, during the execution of the instance of provable gradecast algorithm.

If $p_i$ is able to show a {\em seen-all} proof, it means that it collected at least $n-f$ signed copies of $m$ (see line \ref{pgc:proof}). Considering that in the system we have at most $f$ Byzantine processes that can generate fake signed copies of $m$, it follows that the remaining $n-2f$ copies arrive from correct processes.
Since $n \ge 3f +1$, it follows that at least $f+1$ signed copies of $m$ have been generated by correct processes and sent at the beginning of round $3$.
A correct process broadcasts the message to all other processes in the system. This implies that $p_i$ receives at least $f+1$ copies of $m$ at the beginning of round $3$. It remains to show that $m$ is the most frequent message $p_i$ receives. 
Notice that a correct process sends a signed message $m$ at the beginning of round $2$ only upon receiving from a Byzantine quorum of processes (see Line \ref{pgc:sign}), this implies that no two correct processes relay two different messages at the beginning of round 2. 
Therefore, $m$ is the only message that a correct can send to $p_i$, and the Byzantines cannot create more than $f$ messages. This implies that $m$ is the most frequent message that $p_i$ receives and it will be received with frequency at least $f+1$. This contradicts the fact that $p_i$ does not deliver $m$ with rank $1$. \end{proof}

\subsection{Detailed Algorithm Description}

Processes communicate by provable gradecasts. The three rounds necessary to execute a gradecast instance form a single epoch. We assume that in each epoch there are $n$ concurrent instances of gradecast running, one for each possible sender.  The pseudo-code is in Algorithm \ref{gac:algorithm}.

\begin{algorithm*}
\caption{{\em GAC}  - Algorithm for process $p_i$} \label{gac:algorithm}
\footnotesize

\begin{algorithmic}[1]
\algrenewcommand\algorithmicprocedure{\textbf{function}}
\State $t_d=0,t_u=n, t_{m}=\lfloor \frac{t_u}{2} \rfloor$
\State $G=\epsilon$ 
\State $pro_i=\emptyset$ \Comment{proposed value.}
\State $Proofs=\{\}$

\smallskip
\Procedure{LA-Propose}{$pro_i$}
\State $pro_i=pro_i$
\State {\sc gradecast}$(M=(pro_i,G,\bot))$  \Comment{Epoch 0 - start}
\State $P_i=${\sc rcv}()
\State $G=G \cdot s$
\State $updateproofs(pro_i,P_i,0)$   \Comment{Epoch 0 - end}
\For{$ep \in [1,\ldots, \log(n))+1]$} \label{alg:forcycle}
\State {\sc gradecast}$(M=(pro_i,G,Proofs))$ 
\State $P_i=${\sc rcv}()
\State $V_i=${\sc filter}$(P_i)$ \label{alg:masterset}
\If{{\sc classify}$(V_i)=s$}
\State $G=G \cdot s$
\State $t_d=t_d, t_{u}=t_m, t_{m}=t_{d}+\lfloor \frac{t_u-t_d}{2} \rfloor$
\State $updateproofs(pro_i,P_i,ep)$
\Else
\State $G=G \cdot m$
\State $pro_i= V_i$
\State $t_d=t_m,t_{u}=t_u, t_{m}=t_{d}+\lfloor \frac{t_u-t_{d}}{2} \rfloor$
\EndIf

\EndFor
\State {\sc decide}$(\bigoplus pro_i)$ \Comment{The $\bigoplus$ is needed since $V_i$ is a set of sets}
\EndProcedure

\bigskip
\Procedure{classify}{$V$}
\If{ $|V| \leq t_m$} 
\State  {\bf return} $s$
\Else
\State    {\bf return} $m$
\EndIf
\EndProcedure
\medskip
\Procedure{filter}{$P$}
\State $V = \{\forall v \in M | M \in P \land M$ rank is greater than $0\,\, \land$ {\sc admissible}$(v,M) \land \,\, M$ source is in $G \land \,\, v$ in $E\}$
\State {\bf return} $V$
\EndProcedure
\medskip

\Procedure{admissible}{$v,M$} 
\If{$\forall t \in [0,..,|G|-1]$ such that $G[0:t]$ terminates with $s$, there exists an admissibility proof for $v$ in $M$}
\State {\bf return} True
\Else
\State {\bf return} False
\EndIf
\EndProcedure
\medskip

\Procedure{updateproofs}{$pro_i,P_i,ep$}
\ForAll{$v \in pro_i$}
\State $proofv=[\bot]^{ep+1}$
\If{$ep>0$}
\ForAll{$t \in [0,\ldots,ep-1]$}
\If{$G[t]=s$}
\State Let $p$ be a valid proof from $P_i$ for $G[0:t]$ (it must exists since $v$ is admissible). 
\State $proofv[t]=p$
\EndIf
\EndFor
\EndIf
\State Construct a valid proof $p$ for $v$ and group $G$ using messages in $P_i$. \label{alg2:updateproof}
\State $proofv[ep]=p$
\State $Proofs=Proofs \cup \{proofv\}$
\EndFor
\EndProcedure

\end{algorithmic}
\end{algorithm*}

\subsubsection{Epoch $ep =0$}
Epoch $0$ has a special structure. Correct processes belong to a single group $G=\epsilon$. In this epoch all values in $E \subseteq \{\{1\},\ldots,\{n\}\}$ are admissible (i.e., they do not have to carry an admissibility proof), and no classification step is executed.
That is, at the end of the epoch each process goes to group $G=s$ and no correct $p_i$ updates its value $pro_i$. This phase is performed to force Byzantine processes to commit to a certain set of values that cannot be changed later; values that are not associated with a seen-all proof, showing that they have been gradecasted in epoch $0$, will be ignored in the next epochs.

\subsubsection{Epoch $ep \geq 1$}
Epochs $1$ and onward share the same structure. 
At the beginning of an epoch, all correct processes belonging to a same group share the values of three thresholds.  At the beginning of epoch $ep=1$, the group $G=s$ encloses all the  processes and the thresholds are $t_d=0$, $t_m=\frac{n}{2}$ and $t_u=n$.

\paragraph{Value gradecast.} An epoch starts by making each correct process $p_j$ in a group $G$ gradecast a message $M_j$ containing, its proposal value $pro_i$, the group to which $p_j$ belongs, and an {\em admissibility proof} for each element in $pro_i$ (the structure and the precise purpose of this proof is defined later).

Each other correct process $p_i$ in a group $G$  receives, by mean of the gradecast, a set of tuples: $P_i:\{(p_0,M_0,c_0, S_{p_0,M_0}), (p_1,M_{1},c_1, S_{p_1,M_{1}}), \ldots \}$.

We define a special set of values $V_i$ that is a subset of values in messages contained in $P_i$. 
Set $V_i$ contains all ``{\em admissible values}" in $P_i$  and such that: (1) the rank of the message carrying the value is at least $1$ and (2) the sender of the message is in $G$.

Value $v$ is admissible for process $p_i$ in epoch $ep$ if the message that carries $v$ contains also an ``{\em admissibility proof}" proving that $v$ has been seen by all correct processes in $SLV(G[0:1], G[0:2],\ldots,G[0:ep-1])$, where $SLV(Set)$ is a filter function that removes from the set of string $Set$ all the strings that end with letter $m$. As an example considering $G=ssmsm$ we have $SLV(s,ss,ssm,
ssms,ssmsm)=\{s,ss, ssms\}$.   Essentially, there must be a proof showing that $v$ has been seen with rank 1 by all processes in the epochs in which processes in $G$ have been classified as slave.  The actual structure of this proof is described later (Section \ref{sec:proof}). 

Process $p_i$ becomes a group slave if $|V_i| \leq t_m$, it becomes a group master if $t_m < |V_i|$.
If $p_i$ becomes a slave, it enters the group $G \cdot s$. Otherwise, it becomes a master, and it enters $G \cdot m$. 
\medskip

\paragraph{Slaves actions}  If a process $p_i$ is a slave, it updates its set of thresholds as  $t_d=t_d, t_{u}=t_m, t_{m}=t_{d}+\lfloor \frac{t_u-t_d}{2} \rfloor$. Finally, a slave does not update its proposed value $pro_i$, in the next epoch it will have again the exactly same value it had in the current epoch. 
Regarding the admissibility proof, a correct slave has the duty to collect an admissibility proof for its value proposed $pro_i$ this is done by collecting the seen-all proof generated by gradecasting its $pro_i$ at the beginning of the epoch. 

\smallskip
\paragraph{Masters actions}  If process $p_i$ is a master, it updates its set of thresholds as $t_d=t_m,t_{u}=t_u, t_{m}=t_{d}+\lfloor \frac{t_u-t_{d}}{2} \rfloor$. Moreover, it updates its value $pro_i= V_i$.
Regarding the admissibility proof, a correct master has no duty in creating an admissibility proof for its new $pro_i$, but it has to collect proofs to show that any value inserted in $pro_i$ was admissible in $G[0:ep-1]$. 


\subsubsection{Admissibility proof}\label{sec:proof}
A message $m$ containing a value $v$ carries an admissibility proof for $v$ and group $G$ if message $m$ contains for each, also non-proper, prefix $G[0:j]$ of $G$ terminating with character $s$ (for $j$ in $\{0,1,\ldots,|G|-1\}$), a seen-all proof for $m$ with a sender $p$ in $G[0:j-1]$. From Oservation \ref{obs:byz}, it is immediate to see that an admissibility proof for $G$ implies that all correct processes in $SLV(G[0:1], G[0:2],\ldots,G[0:|G|-1])$ received $v$ in the value gradecast phase and ranked the source of the gradecast at least $1$.  See Figure \ref{apf} for a graphical representation of the usefulness of such a proof. 
Assume there exists an admissibility proof for value $v$ and group $G=ssms$, then there is a seen-all proof for the epochs $0,1,3$ and groups in $G_{s}:\{s,ss,ssms\}$(marked as green in the figure).
This implies that, in each of these epochs, value $v$ has been seen by each correct process with rank at least 1. 
In particular, a seen-all proof for value $v$ and group $ss$ implies that $v$ has been gradecasted by a process in group $s$, and it has been received by all correct with rank at least $1$. This implies that $v$ has been inserted in the proposal of all correct masters in group $sm$ (in the figure we represent with an orange border the processes that have $v$ in their proposal). This means that a master in $ssm$ can update its proposal inserting $v$, and it knows that it will still be comparable with the decision of any process in a group with prefix $sm$. 
Iterating the reasoning, a chain of seen-all proofs, the first for group $ss$ and the second for group $ssms$, implies that $v$ is in the proposal of all correct processes in a group with prefix 
$sm$ or $ssmm$, and thus a future master in $ssmmsm$ can safely include $v$ in its proposal. 
The necessity of a seen-all proof for epoch $0$, and thus for group $s$ is needed to force Byzantine processes to commit to at most $f$ values. This is due to the fact that $f$ Byzantine processes are able to create admissibility proofs for at most $f$ distinct values in epoch $0$. 

\begin{figure}
  \centering
    \includegraphics[width=0.4\textwidth]{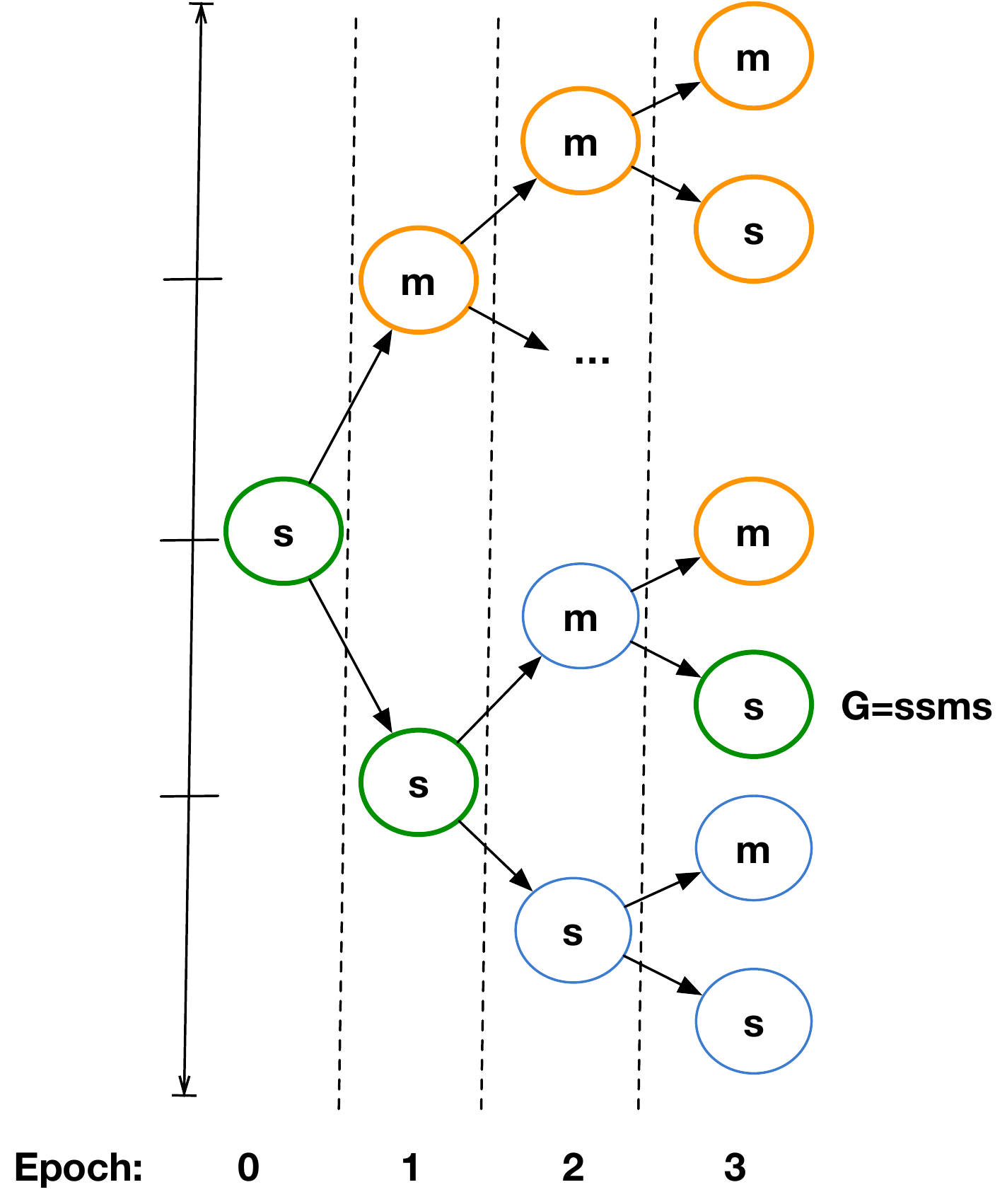}
      \caption{Graphical representation of the purpose of an admissibility proof.  \label{apf} }
\end{figure}

\subsubsection{Termination}
A process $p_i$ terminates the above algorithm when the epoch is $\log(n)+1$. Upon termination it decides its  value $pro_i$.

\subsection{Correctness of GAC}\label{gac:correctness}

\begin{definition} 
Let $A(G)$ be the set of values admissible for correct processes in group $G$ during the gradecast of epoch $|G|$. 
\end{definition}

Given a group $G$ the lemma below shows that the set of admissible values of any other group $G'=G \cdot \sigma$ will be a subset, not necessarily proper, of  $A(G)$.

\begin{Lemma} \label{lemma:nonincreasing}
Let us consider the group $G$ it holds $A(G') \subseteq A(G)$ for any $G' = G \cdot y$. 
\end{Lemma}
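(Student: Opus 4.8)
The plan is to prove the containment directly from the definition of admissibility, exploiting the fact that $G' = G\cdot y$ shares every prefix of $G$. Recall from Section~\ref{sec:proof} that a value $v$ lies in $A(G)$ exactly when a correct process in $G$ can verify, at epoch $|G|$, an admissibility proof for $v$ and $G$; concretely this means that for every prefix $G[0:j]$ (with $j\in\{0,\ldots,|G|-1\}$) that terminates with the character $s$, the message carrying $v$ exhibits a seen-all proof for $v$ gradecast by a sender in the group $G[0:j-1]$. I would begin by writing out this requirement verbatim for both $G$ and $G'$, so that the two lists of ``slave-prefixes to be certified'' can be compared side by side.

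The key observation is a prefix-containment statement. Since $G' = G\cdot y$, we have $G'[0:j]=G[0:j]$ for every $j\le |G|-1$, and the length-$|G|$ prefix of $G'$ is precisely $G$ itself. Consequently the collection of slave-prefixes that an admissibility proof for $G'$ must certify (at epoch $|G'|=|G|+1$, so with $j$ ranging over $\{0,\ldots,|G|\}$) is obtained from the collection required by $G$ (with $j$ ranging over $\{0,\ldots,|G|-1\}$) by adding at most the single prefix of length $|G|$, namely $G$, which contributes a requirement only if $G$ ends in $s$. Note that the newly appended letter $y$ sits at position $|G|$ and would enter a proof only through the full string $G'$, which lies outside the index range $\{0,\ldots,|G'|-1\}$; hence the argument is insensitive to whether $y=s$ or $y=m$. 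For each shared slave-prefix $G'[0:j]=G[0:j]$ the required seen-all proof is identical in the two cases, since the certified value $v$ is the same and the admissible sender group $G'[0:j-1]=G[0:j-1]$ coincides (because $j-1<|G|$).

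From here the conclusion is immediate. Take any $v\in A(G')$ and fix an admissibility proof $\pi'$ witnessing it. By the containment above, $\pi'$ already contains a seen-all proof, with the matching sender group, for every slave-prefix demanded by the admissibility condition for $G$; therefore the sub-collection of $\pi'$ indexed by $\{0,\ldots,|G|-1\}$ is itself an admissibility proof for $v$ and $G$. Hence $v$ is admissible for a correct process in $G$ at epoch $|G|$, i.e.\ $v\in A(G)$, and $A(G')\subseteq A(G)$ follows.

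I expect the only delicate point to be the index bookkeeping: pinning down exactly which prefixes each admissibility proof ranges over, and checking that appending $y$ can never \emph{remove} a requirement (it can only add the one attached to the prefix $G$). Once the prefix-containment is nailed down, nothing further is needed. In particular Observation~\ref{obs:byz} is used only to interpret what a seen-all proof semantically guarantees (delivery with rank at least $1$ at every correct process); for the bare set inclusion it is not even required, so the proof can stay at the purely syntactic level of the proofs carried inside the messages.
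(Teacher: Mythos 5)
Your core argument is correct and is, in essence, the same as the paper's: the admissibility requirements attached to an extension of $G$ contain those attached to $G$, so any admissibility proof for $v$ and $G'$ restricts to an admissibility proof for $v$ and $G$. The paper merely packages this observation as an induction on the extension string (base case $G'=G\cdot\epsilon$; one-letter inductive step split into $G'=\sigma\cdot m$, where the requirement set is unchanged, and $G'=\sigma\cdot s$, where the proof for $G'$ is ``by construction also an admissibility proof for $\sigma$''), so the difference is one of presentation, not of ideas; like you, the paper never needs Observation~\ref{obs:byz} here.

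Two slips in your write-up should be fixed. First, in the lemma $y$ is an arbitrary string over $\{s,m\}$, not a single letter: the paper's proof starts from the base case $G'=G\cdot\epsilon$ and the lemma is later invoked for arbitrarily long extensions (e.g.\ inside Lemma~\ref{lemma:masterdominate}), whereas your bookkeeping ($|G'|=|G|+1$, ``at most the single new prefix'') assumes $|y|=1$. This is harmless, since your prefix-containment argument works verbatim for any $y$ --- every non-empty prefix of $G$ ending in $s$ is also such a prefix of $G'$, with the same sender-group constraint --- or one can iterate your one-letter step, but as written the general case is not covered. Second, you misread the slicing convention: in this paper $G[k:l]$ is inclusive, so $G[0:j]$ has length $j+1$, and the range $j\in\{0,\ldots,|G|-1\}$ \emph{does} include the non-proper prefix $G$ itself (the definition in Section~\ref{sec:proof} explicitly says ``also non-proper''). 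Consequently, the requirement created by appending a letter $y$ is the one attached to the full string $G'$, active exactly when $y=s$ --- not one attached to $G$, active when $G$ ends in $s$, and it is not true that the appended letter ``lies outside the index range.'' Your conclusion that the containment is insensitive to the value of $y$ remains true (appending a letter can only add a requirement, never remove one), but the reason you give for it is wrong under the paper's convention; the restriction argument itself survives unchanged.
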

\begin{proof}
The proof is by induction on $y$:
	\begin{itemize}
	\item {\bf Base case.} $G'=G\cdot \epsilon$: It is immediate by observing that $G'=G$. Thus $A(G)=A(G')$.
	\item {\bf Inductive case.} We assume the above is true up to $G'=\sigma$, in the inductive step we have to show that it holds for the two possible extensions of $\sigma$.
	Case (1): $G'=\sigma \cdot m$, in such a case the set of admissible values does not change. Thus $A(\sigma)=A(G')$. Case (2): $G'=\sigma \cdot s$, suppose that a value $v$ 
	is in $A(G')$ but not in $A(\sigma)$. In order to be admissible for $G'$ there must exist a proof for each prefix of $G'$ ending with an $s$, this is by construction also an admissibility proof for $\sigma$. Therefore $A(G') \subseteq A(\sigma)$.
	\end{itemize}

\end{proof}

\begin{definition} 
Given an epoch $ep$ we define as $W_{ep}$ the set of values that have been gradecasted by a process in epoch $ep$ and that have been seen by all correct processes with gradecast  rank at least $1$. 
With $W_{ep}(G)$ we indicate the subset of $W_{ep}$ that was sent by processes claiming to belong to group $G$. 
\end{definition}

\begin{Lemma} \label{lemma:masterdominate}
Consider a correct master $p_i$ in the group $G=\sigma \cdot m$. If $p_i$ decides, its decision, let it be $dec_i$, is comparable with the one of any correct slave in $G'= \sigma \cdot s$. 
\end{Lemma}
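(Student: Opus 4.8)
The plan is to prove the stronger claim that the value $p_i$ adopts at the moment of the split already contains every slave decision. Write $V_i$ for the set computed by \textsc{filter} at epoch $|\sigma|$, so that the master sets $pro_i = V_i$. I will show that $dec_j \subseteq \bigoplus V_i$ for every correct slave $p_j \in \sigma\cdot s$. This suffices: a correct process never shrinks its proposal across epochs (a slave leaves $pro_i$ untouched, and when a process becomes a master it resets $pro_i$ to $V_i$, which contains its previous proposal because its own values are gradecast by a correct sender, hence delivered with rank $2$ and re-admitted by \textsc{filter}), so $dec_i = \bigoplus pro_i^{\mathrm{final}} \supseteq \bigoplus V_i \supseteq dec_j$. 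Since comparability in $L_n$ is set inclusion, this yields $dec_j \leq dec_i$, which is the claim.

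The heart of the argument is a per-element analysis. Fix $v \in dec_j$; since $dec_j = \bigoplus pro_j^{\mathrm{final}}$ and a correct process only ever carries admissible values with valid proofs, $v$ is an element of $E$ equipped with an admissibility proof for $p_j$'s final group $G_j$, and $G_j$ has $\sigma\cdot s$ as a prefix. By the definition of an admissibility proof in Section~\ref{sec:proof}, because $\sigma\cdot s$ is a prefix of $G_j$ ending in $s$, this proof contains a seen-all proof for the epoch-$|\sigma|$ gradecast of $v$ by a sender claiming group $\sigma$, together with seen-all proofs for every $s$-prefix of $\sigma$ --- the latter being precisely an admissibility proof for $v$ in group $\sigma$.

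Now apply Observation~\ref{obs:byz} to the seen-all proof for the prefix $\sigma\cdot s$: every correct process delivered $v$ with rank at least $1$ in that gradecast, and in particular the master $p_i$ --- which is still in group $\sigma$ during the epoch-$|\sigma|$ gradecast --- delivered $v$ from a source in $\sigma$, inside a message that itself carries a valid admissibility proof for group $\sigma$. Thus $v$ satisfies all four conditions of \textsc{filter} for $p_i$: rank at least $1$, \textsc{admissible}$(v,\cdot)$ for $\sigma$, source in $\sigma$, and $v \in E$. Hence $v \in V_i$, and since $v$ was an arbitrary element of $dec_j$ we conclude $dec_j \subseteq \bigoplus V_i$, completing the argument.

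I expect the delicate step to be the claim, used above, that every value in a deciding slave's proposal is accompanied by a valid admissibility proof whose component for the prefix $\sigma\cdot s$ refers exactly to the epoch-$|\sigma|$ gradecast that all correct processes saw. This rests on the invariant --- maintained by \textsc{updateproofs} and \textsc{filter} --- that a correct process only inserts into $pro_i$ values admissible for its current group, and that the nested seen-all proofs are preserved (not fabricated) as a value is carried forward through later epochs; Observation~\ref{obs:byz} is what rules out Byzantine forgery of these proofs, while Lemma~\ref{lemma:nonincreasing} guarantees that the admissibility information attached to $v$ is never invalidated as groups are refined.
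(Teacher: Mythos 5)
Your proof is correct and follows essentially the same route as the paper's: both arguments extract from the slave-side admissibility evidence a seen-all proof for the epoch-$|\sigma|$ gradecast, apply Observation~\ref{obs:byz} to conclude that the master delivered $v$ with rank at least $1$ inside a message carrying an admissibility proof for $\sigma$, deduce $v \in V_i$ via \textsc{filter}, and finish with the fact that a master never removes values from $pro_i$. The only cosmetic difference is that the paper quantifies over the whole admissible set $A(\sigma \cdot s)$ and bounds the slave's decision by Lemma~\ref{lemma:nonincreasing}, whereas you trace each $v \in dec_j$ directly through the proof-maintenance invariant (which the paper formalizes as Lemma~\ref{lemma:onceisforever}).
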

\begin{proof}
To prove the above it is sufficient to show that each value $v$ in $A(G')$ is included in $dec_i$. In order for a value to be in $A(G')$ it must exist an admissibility proof for group $G'$, the admissibility proof implies two things:
\begin{itemize}
\item (1) that $v$ is in $W_{|\sigma|}(\sigma)$ (see Observation \ref{obs:byz});
\item (2) that $v$ is in $A(\sigma)$.
 \end{itemize}
 From the above, and the code of a correct master $p_i$, $v$ will be in the set $V_i$ of Line \ref{alg:masterset} at epoch $|\sigma|$. Since a master never removes a value from its $pro_i$ its decision $dec_i$ must include $v$. 
 This complete the proof since each correct slave $p$ in $G'$ never put in its proposal a value that is not in $A(G')$: by Lemma \ref{lemma:nonincreasing} the set of values that a slave will consider in any possible future execution of Line \ref{alg:masterset} is
  included in $A(G')$. 
\end{proof}

\begin{Lemma} 
Let $p$ be a correct process that decides $dec_i$. Its decision respects Inclusivity and Non-Triviality. 
\end{Lemma}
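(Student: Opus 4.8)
The plan is to prove the two properties separately: Inclusivity by a monotonicity argument on the process's proposal, and Non-Triviality by a counting argument on the set of values that can ever be admissible for a correct process. Throughout, I identify $p_i$'s input with an element $v_0\in E$ and note that $p_i$'s decision is the join $\bigoplus pro_i$ of its final proposal set, so it suffices to control which $E$-values can ever sit in $pro_i$.

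For Inclusivity I would show that $\bigoplus pro_i$ is non-decreasing across epochs and always dominates $v_0$. In epoch $0$, $p_i$ sets its proposal to $v_0$, gradecasts it, and becomes a slave; since a slave never alters its proposal, the only nontrivial case is a master step, where $pro_i$ is overwritten by $V_i$. Here I claim the old proposal's $E$-values all reappear in $V_i$, so that $\bigoplus pro_i^{new}\ge \bigoplus pro_i^{old}$. Indeed, at the start of the epoch $p_i$ gradecasts $(pro_i,G,Proofs)$ carrying each such value together with a valid admissibility proof it has itself maintained through \textsc{updateproofs}; and because $p_i$ is the correct sender of that gradecast instance, every correct process---$p_i$ included---delivers the message with rank $2$ (the validity property of gradecast). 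Consequently the tuple passes every test of \textsc{filter} (rank $>0$, source $p_i\in G$, value in $E$, and admissible), so each old value lies in $V_i$. By induction $\bigoplus pro_i$ never decreases and remains $\ge v_0$, hence $pro_i=v_0\le \bigoplus pro_i^{fin}=dec_i$, which is Inclusivity.

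For Non-Triviality I would show $dec_i\le\bigoplus(X\cup B)$ with $B\subseteq E$ and $|B|\le f$. The key observation is that every value $v$ that ever enters a correct process's $V_i$ lies in $E$ and is admissible for a group $G$ that begins with the letter $s$ (all correct processes become slaves at the end of epoch $0$). Since $G[0{:}1]=s$, the admissibility proof for $v$ must in particular contain an epoch-$0$ seen-all proof (with sender in $G[0{:}0]=\epsilon$); by Observation~\ref{obs:byz} this means $v$ was gradecast in epoch $0$ by some sender $p_s$ and delivered by \emph{every} correct process with rank at least $1$. Now I invoke the agreement property of gradecast: for a fixed sender, all correct processes delivering with positive rank deliver the same value, so each sender can contribute \emph{at most one} value carrying an epoch-$0$ seen-all proof. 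The $n-f$ correct senders each gradecast exactly their own proposal, so their contributions lie in $X$; the $f$ Byzantine senders contribute at most $f$ distinct values, and only those in $E$ survive \textsc{filter}, so I collect them into $B$ with $B\subseteq E$ and $|B|\le f$. Thus every value in any correct $V_i$ lies in $X\cup B$. Since $p_i$'s final proposal is either its input (an element of $X$) or the $V_i$ of its last master epoch, every element of $pro_i^{fin}$ is in $X\cup B$, and monotonicity of the join yields $dec_i=\bigoplus pro_i^{fin}\le\bigoplus(X\cup B)$.

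The main obstacle is the counting step: I must be certain that a single Byzantine identity cannot forge seen-all proofs for two distinct values, which is precisely what Observation~\ref{obs:byz} together with gradecast agreement rules out, and that the $f$ Byzantine \emph{senders} are the only source of values outside $X$. Tying this to the fact---forced by the epoch-$0$ commitment embedded in \emph{every} admissibility proof---that an admissible value must already be admissible in epoch $0$ is the crux; Lemma~\ref{lemma:nonincreasing} then guarantees that no later epoch can reintroduce a value outside $X\cup B$, so the bookkeeping in the final bound goes through.
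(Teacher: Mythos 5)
Your proof is correct and follows essentially the same route as the paper's: Inclusivity from the fact that a correct process's own values stay admissible and therefore re-enter $V_i$ at every epoch, and Non-Triviality from the epoch-$0$ commitment (every admissibility proof embeds an epoch-$0$ seen-all proof) combined with gradecast agreement, which limits each Byzantine sender to one element of $E$, together with Lemma~\ref{lemma:nonincreasing}. Yours is simply a more explicit, spelled-out version of the paper's terse argument (modulo a harmless slip in the substring indexing convention, where the paper's $G[0:j]$ includes both endpoints).
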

\begin{proof}
The inclusivity is immediate from the fact that its proposal $pro_i$ will be admissible for any epoch, thus, in each epoch, it will always be  included in its set $V_i$. 
It remains to show that $dec_i \leq \bigoplus (X \cup B)$. The non-trivial part is to show the bound of $f$ on the set $B$.
We will show by induction on epoch number $ep$ that the sets $B_{ep}$ of admissible values issuable by Byzantines are such that $B_{ep+1} \subseteq B_{ep}$ and that $|B_{0}|=f$.
Note that, $B_{ep+1} \subseteq B_{ep}$ is implied by Lemma \ref{lemma:nonincreasing}.

The bound on $|B_{0}|=f$ derives directly from the gradecast property: a single Byzantine cannot gradecast two different elements of $E$ in epoch $ep=0$. 
\end{proof}

\begin{Lemma} \label{lemma:onceisforever}
Let $p$ be a correct process that at some epoch $ep \leq \log(n)+1$ updates its proposal $pro_i$ with a new value $w$. For any possible $G$ such that $p \in G$ and $\log(n)+1 \geq |G| \geq ep$ we have $w \in A(G)$.  
\end{Lemma}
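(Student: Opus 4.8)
The plan is to show that once $p$ places $w$ into its proposal it can forever certify $w$ as admissible for every group it later joins. First I would observe, from Algorithm~\ref{gac:algorithm}, that a correct process modifies $pro_i$ only in the master branch; slaves leave their proposal untouched. Hence the epoch $ep$ at which $w$ is inserted is an epoch in which $p$ is classified as a master. Writing $G_{ep}$ for $p$'s group during epoch $ep$ (so $|G_{ep}|=ep$ and $p$ then enters $G_{ep}\cdot m$), the value $w$ enters $pro_i$ through the set $V_i$ of Line~\ref{alg:masterset}, and $V_i$ keeps only values passing the {\sc admissible} test for $G_{ep}$. Thus $w$ is admissible for $G_{ep}$, i.e.\ $w\in A(G_{ep})$, which already disposes of the smallest case $|G|=ep$.

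For the groups with $|G|>ep$ I would induct along the single chain $G_{ep+1}\subseteq G_{ep+2}\subseteq\cdots$ of groups $p$ belongs to after the update, maintaining at every epoch $k\ge ep+1$ two invariants: (i) $w\in pro_i$ at the start of epoch $k$; and (ii) $p$ holds a valid admissibility proof for $w$ and its current group $G_k$, whence $w\in A(G_k)$. The base case $k=ep+1$ is immediate: $G_{ep+1}=G_{ep}\cdot m$ ends in $m$, so by the definition of an admissibility proof it needs no seen-all proof beyond those required for $G_{ep}$, and $p$ retains from the message that delivered $w$ the proof that made $w$ pass {\sc admissible} at $G_{ep}$; moreover $w\in pro_i$ because $pro_i$ was just set to $V_i\ni w$.

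The heart of the argument is preserving the invariants across a step $G_k\to G_{k+1}$. Invariant (i) is easy to carry along: a slave never changes $pro_i$, and a master resets $pro_i=V_i$ but necessarily re-admits $w$, since $p$ delivers its own gradecast and, by (ii), $w$ still carries an admissibility proof and therefore survives {\sc filter}. For (ii), if $p$ becomes a master then $G_{k+1}=G_k\cdot m$ ends in $m$ and introduces no new $s$-prefix, so the admissibility proof held for $G_k$ is already one for $G_{k+1}$. If $p$ becomes a slave then $G_{k+1}=G_k\cdot s$ introduces exactly one new prefix ending in $s$ — namely $G_{k+1}$ itself — which must be witnessed by a seen-all proof for a gradecast whose sender lies in $G_k$. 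Here I use that $p\in G_k$ gradecasts its proposal (containing $w$ by (i)) at epoch $k$; being correct, the correct-sender property of the provable gradecast forces every correct process to deliver this message with rank $2$, so a seen-all proof for it exists and is collected by $p$ in {\sc updateproofs}. Concatenating this fresh witness with the proofs $p$ retains for the $s$-prefixes of $G_k$ yields a valid admissibility proof for $w$ and $G_{k+1}$, giving $w\in A(G_{k+1})$.

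The step I expect to be the main obstacle is precisely this slave transition, the only point at which a genuinely new cryptographic witness is required: the argument must simultaneously (a) guarantee $w$ is really gradecasted (invariant (i)), (b) invoke the correct-sender guarantee of the provable gradecast to produce the seen-all proof, and (c) verify that the sender/prefix indexing in the definition of admissibility lines up with $p$'s group at epoch $k$ (the new prefix $G_{k+1}$ needing a sender in $G_{k+1}[0:k]=G_k$). Running the induction up to length $\log(n)+1$ then yields $w\in A(G)$ for every group $G$ of $p$ with $ep\le|G|\le\log(n)+1$, completing the proof. Lemma~\ref{lemma:nonincreasing} plays a complementary role here: it shows admissible sets only shrink along a chain, whereas this lemma isolates the single value $w$ and shows that a correct process never drops it.
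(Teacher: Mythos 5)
Your proof is correct and takes essentially the same route as the paper's: an induction along the chain of groups $p$ joins, where the master extension is handled by the fact that $G\cdot m$ adds no new $s$-prefix (so $A(G\cdot m)=A(G)$) and the slave extension by the fresh seen-all proof that $p$, being a correct gradecaster of a proposal containing $w$, collects in \textsc{updateproofs} (line~\ref{alg2:updateproof}). You are in fact more explicit than the paper, which leaves implicit the invariant $w\in pro_i$ (needed both for the re-gradecast in the slave case and for surviving the master's reset $pro_i=V_i$); the only nitpick is notational: under the paper's substring convention the new prefix is $G_{k+1}[0:k]=G_{k+1}$ itself and the required sender set is $G_{k+1}[0:k-1]=G_k$, a harmless off-by-one relative to your indexing.
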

\begin{proof}
The proof is by induction:
\begin{itemize}
\item Base case: $|G|=ep$: in this case $p$ puts $w$ in its proposal when it belongs to group $G$, since $p$ is correct it does so if and only if $w \in A(G)$. 
\item Inductive step: Let us suppose that it holds for $G$, we will show that it holds also for the two possible extensions $G'=G \cdot s$ and $G''=G \cdot m$. For $G'$ we have that
process  $p$ being correct it updates the admissibility proof for $w$ (line \ref{alg2:updateproof}). This implies that $w \in A(G')$.  Considering $G''$ we have that, by construction of the admissibility proof, $A(G'')=A(G)$ and thus $w \in A(G'')$ (by inductive hypothesis $w \in A(G)$). 
\end{itemize}
\end{proof}

\begin{Lemma} \label{lemma:expdecreasing}
For any correct process $p_i \in G$ with $G \neq \epsilon$ we have $|A(G)| -| pro_i|  \leq n/2^{|G|-1}$. Moreover, given process $p_i \in G$ we have $|A(G)| \leq t_u$ and $pro_i \geq t_d$, where $t_u$ and $t_d$ are the thresholds of group $G$. 

\end{Lemma}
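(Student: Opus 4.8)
The plan is to prove the three assertions $|pro_i| \ge t_d$, $|A(G)| \le t_u$, and $t_u - t_d \le n/2^{|G|-1}$ by a single induction on $|G|$, since the first claim of the lemma is then immediate: combining $|A(G)| \le t_u$ with $|pro_i| \ge t_d$ gives $|A(G)| - |pro_i| \le t_u - t_d$, and the width bound closes the gap. The base case is $|G| = 1$ (group $s$, just after epoch $0$), where $t_d = 0$ and $t_u = n$: here $|pro_i| \ge 0$ and $|A(s)| \le n$ are trivial, since every admissible value is a singleton of $E$ and $|E| \le n$, while $t_u - t_d = n = n/2^{0}$.

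For the proposal lower bound $|pro_i| \ge t_d$ I would argue along the two transitions out of a group $G$. On a slave step the process changes neither $pro_i$ nor $t_d$, so the inductive hypothesis is preserved verbatim. On a master step the process executes $pro_i \gets V_i$ precisely because \textsc{classify}$(V_i) = m$, i.e.\ $|V_i| > t_m$, and it simultaneously sets the new $t_d$ to the old $t_m$; hence $|pro_i| = |V_i| > t_m = t_d$, which is even stronger than required. This part is routine.

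The crux is the admissible upper bound $|A(G)| \le t_u$, and in particular the slave transition. On a master step $A(\cdot)$ is unchanged (Lemma~\ref{lemma:nonincreasing}, case $y=m$) and $t_u$ is unchanged, so the bound carries over directly. On a slave step the new group is $G' = G\cdot s$ with $t_u(G') = t_m(G)$, and I would show that $A(G') \subseteq V_i$ for every correct process $p_i$ that classified itself a slave while in $G$. The key point is that for a value $v$ to be admissible for $G'$ its message must carry a seen-all proof for the gradecast of the current epoch $|G|$ from a source in $G$ (the prefix $G'$ itself ends in $s$), on top of the full admissibility proof for $G$. By Observation~\ref{obs:byz} such a seen-all proof forces every correct process --- in particular $p_i$ --- to have delivered that very message with rank at least $1$; since the message also carries the admissibility proof for $G$, has a source in $G$, and $v \in E$, the value $v$ survives \textsc{filter} and lies in the $V_i$ computed at line~\ref{alg:masterset}. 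Therefore $|A(G')| \le |V_i| \le t_m(G) = t_u(G')$, because $p_i$ became a slave exactly when $|V_i| \le t_m(G)$. I expect this containment to be the main obstacle, as it is where the cryptographic seen-all machinery has to be converted into a purely combinatorial cardinality bound.

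Finally, for the width bound I would run the parallel induction on the threshold recurrence $t_m = t_d + \lfloor (t_u - t_d)/2 \rfloor$: a slave step replaces the width $w = t_u - t_d$ by $\lfloor w/2\rfloor$ and a master step by $\lceil w/2\rceil$, so starting from $w = n$ at $|G| = 1$ the width after $|G|-1$ halvings is at most $n/2^{|G|-1}$ (with equality when $n$ is a power of two, and otherwise up to the rounding introduced by the ceilings on master steps). Assembling the three inequalities yields $|A(G)| - |pro_i| \le t_u - t_d \le n/2^{|G|-1}$, which completes the proof.
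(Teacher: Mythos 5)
Your proposal follows essentially the same route as the paper's proof: the same three-part induction on $G$ (lower-bounding $|pro_i|$ by $t_d$, upper-bounding $|A(G)|$ by $t_u$, and bounding the width $t_u-t_d$), the same base case, and the same master/slave case split. Your treatment of the slave case, where you derive $A(G\cdot s)\subseteq V_i$ from the seen-all proof of the current epoch via Observation~\ref{obs:byz} and conclude $|A(G\cdot s)|\le|V_i|\le t_m$, is in fact more explicit than the paper's one-line claim that values of $A(\sigma)$ outside $V_i$ cannot remain admissible; that part is sound.

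The one place where your assembly is loose is the width bound, and the looseness is real rather than cosmetic. With the pseudocode update $t_d\gets t_m$ on a master step the width becomes $\lceil w/2\rceil$, and iterated ceilings do exceed $n/2^{|G|-1}$: for $n=7$, a single master step from the base case leaves $t_u-t_d=7-3=4>7/2$, so the chain $|A(G)|-|pro_i|\le t_u-t_d\le n/2^{|G|-1}$ breaks at its second link, and your parenthetical about ``rounding introduced by the ceilings'' concedes exactly this. The repair is already contained in your own master-step analysis: you proved $|pro_i|\ge t_m+1$, strictly above the new $t_d$. Carrying that extra $+1$ through (equivalently, taking the effective lower threshold after a master step to be $t_m+1$, which is what the paper's proof does when it writes $t'_d=t_d+\lfloor(t_u-t_d)/2\rfloor+1$, one more than the pseudocode's value) shrinks the relevant gap on a master step to $\lceil w/2\rceil-1\le\lfloor w/2\rfloor\le w/2$; every step then halves the gap cleanly and yields $|A(G)|-|pro_i|\le n/2^{|G|-1}$ as required. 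So you have all the ingredients, but the final inequality should be run on $|A(G)|-|pro_i|$ (or on the effective thresholds), not on the literal pseudocode thresholds.
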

\begin{proof}
We will show that for each $p_i \in G$ it holds that $
 |A(G)| \leq t_u$, that $|pro_i| \geq t_d$, and, that $ t_u - t_d\leq n/2^{|G|-1}$. Recall that $t_u,t_d$ and $t_m$ depend on $G$. The proof is by induction on $G$.
\begin{itemize}
\item Base case: $G=s$: it derives immediately from the structure of the lattice and the fact that $t_d=0$ and $t_u=n$ for all processes in $G$. 
\item Inductive case: By inductive hypothesis the claim holds for $\sigma$. We have $G=\sigma \cdot m$ or $G=\sigma \cdot s$. Let $t_u,t_m,t_d$ be the thresholds of group $\sigma$, by inductive hypothesis we have  $
 |A(\sigma)| \leq t_u$ that $|pro_i| \geq t_d$ and that $ t_u - t_d\leq n/2^{|G|-1}$.
\begin{itemize}
\item Case $G= \sigma \cdot m$.  First notice that for master processes the set of admissible values does not change, that is $A(\sigma)=A(G)$ neither the threshold $t_u$. In such a case we will show that the lattice ``shrinks from below" in the sense
that by updating the lower bound on $pro_i$ our claim holds. Since $p_i$ is in $G$ we have that it updates $pro_i$ and the new $pro_i$ contains a number of elements that are at least $t_{d}+\lfloor \frac{t_{u}-t_{d}}{2} \rfloor+1$, thus the new $t'_{d}=t_d+ \lfloor \frac{t_{u}-t_{d}}{2} \rfloor+1$. 
By immediate algebraic manipulations we have $t_u - t'_d \leq \frac{t_u-t_d}{2}$ that proves our claim.  
\item Case $G= \sigma \cdot s$. In such a case we will show that the lattice ``shrinks from above". Consider the generic $p_i \in G$ this implies that at the end of epoch $|\sigma|$ the set $V_i$ contained at most $t_m$ elements. It is immediate that
since $p_i$ is correct each value in $A(\sigma)$ that is not in $V_i$ cannot be admissible in the extension $G$. Therefore we have $|A(G)| \leq t_m $, now it remains to show that $|pro_i| \geq t_d$ but this is immediate from inductive hypothesis. 
Thus we have $t_u'=t_m$ and $t'_d=t_d$, that implies $t_u'-t_d' \leq t_m - t_d\leq \frac{t_u-t_d}{2}$.  
\end{itemize}
\end{itemize}
\end{proof}

\begin{observation}
Any correct process decides at epoch $\log(n)+1$.
\end{observation}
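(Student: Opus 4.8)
The plan is to prove the observation in two complementary parts: a mechanical termination argument that fixes the round at which a decision is emitted, and an application of Lemma~\ref{lemma:expdecreasing} that certifies that by that round the sub-lattice assigned to each group has collapsed to a single point, so the emitted value is well defined. For the termination part, I would note that the body of \textsc{LA-Propose} consists of epoch $0$ followed by a loop over the fixed range $ep \in \{1,\ldots,\log(n)+1\}$. Each epoch performs exactly one provable gradecast (Algorithm~\ref{pgc:algorithm}), which in the synchronous model always halts after its three send/receive phases irrespective of Byzantine behaviour, followed by the purely local calls to \textsc{filter}, \textsc{classify} and \textsc{updateproofs}, each of which manipulates finite sets and terminates. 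Hence every correct process executes exactly $\log(n)+2$ epochs, i.e.\ $3(\log(n)+2)$ rounds, and then reaches the \textsc{decide} statement; this is precisely the sense in which it decides at epoch $\log(n)+1$.

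For the second part I would track the group string. Epoch $0$ sets $G=s$, so $|G|=1$, and each of the $\log(n)+1$ subsequent iterations appends a single character, so at the moment of decision the process belongs to a group $G$ with $|G|=\log(n)+2$. Applying Lemma~\ref{lemma:expdecreasing} to this $G$ gives $|A(G)|-|pro_i|\le n/2^{|G|-1}=n/2^{\log(n)+1}=\tfrac12$. Since $|A(G)|$ and $|pro_i|$ are cardinalities (non-negative integers) and, by Lemma~\ref{lemma:onceisforever}, every value a correct process holds lies in $A(G)$ (so $pro_i\subseteq A(G)$ and therefore $|pro_i|\le|A(G)|$), the inequality forces $|A(G)|-|pro_i|=0$, hence $pro_i=A(G)$. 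Thus the admissible sub-lattice for $G$ reduces to a single reachable value and the output $\bigoplus pro_i$ is uniquely determined, confirming that the decision taken at epoch $\log(n)+1$ is well defined.

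The main obstacle I anticipate is the off-by-one bookkeeping: one must be careful that the $+1$ in the loop range translates into $|G|=\log(n)+2$ at decision time, which is exactly what drives $n/2^{|G|-1}$ strictly below $1$. Stopping one epoch earlier would only yield a gap bound of $1$ (attained during the last gradecast, where $|G|=\log(n)+1$ and $n/2^{\log(n)}=1$), and a bound of $1$ does not force equality of the two integer cardinalities. The only genuine, if small, additional step is justifying $pro_i\subseteq A(G)$ so that the cardinality bound upgrades to the set identity $pro_i=A(G)$; this follows from Lemma~\ref{lemma:onceisforever} together with the fact that a correct process only ever inserts admissible values into $pro_i$.
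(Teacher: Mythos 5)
Your first paragraph is, by itself, a complete proof of the observation, and it is exactly the argument the paper leaves implicit (the paper states this as an observation with no proof at all): the body of \textsc{LA-Propose} is epoch $0$ followed by a loop over the fixed range $ep \in [1,\ldots,\log(n)+1]$, each epoch consisting of one three-round gradecast plus local computation on finite sets, after which \textsc{decide} is reached; synchrony guarantees every round completes regardless of Byzantine behaviour, so every correct process decides at the end of epoch $\log(n)+1$. So on the core claim you and the paper coincide.

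Your second part, however, answers a question the statement does not ask, and in doing so it oversteps the lemmas it cites. The observation is purely about \emph{when} a correct process decides; the decided value $\bigoplus pro_i$ is a well-defined lattice element for any set $pro_i$, so no collapse of the admissible sub-lattice to a single point is needed for the decision to be well defined. The collapse argument is what the paper uses elsewhere, inside the comparability lemma, where it appears in threshold form (eventually $t_u - t_d = 1$ and $t_m = t_d$, forcing processes in the same final group to hold the same value). Moreover, as written your part 2 applies Lemma~\ref{lemma:onceisforever} to the final group, which by your own (correct) bookkeeping has $|G| = \log(n)+2$, whereas that lemma is stated only for $\log(n)+1 \geq |G|$; likewise $A(G)$ is defined as the set of values admissible ``during the gradecast of epoch $|G|$'', and no epoch-$(\log(n)+2)$ gradecast ever takes place. (The paper itself is somewhat loose on this point in its comparability proof, so this is a defect inherited from the source, but your proof leans on it where the paper's version of this particular observation does not need to.) None of this damages your proof of the observation, because the entire second part can simply be deleted; I would do so, and keep only the termination argument.
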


\begin{Lemma} 
Given any pair $p_i,p_j$ of processes in $C$ their decision $dec_i$ and $dec_j$ are comparable. 
\end{Lemma}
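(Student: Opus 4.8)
The plan is to reduce Comparability to the binary master/slave tree of groups and to discharge it entirely through the structural lemmas already proved. By the preceding observation every correct process decides at epoch $\log(n)+1$, so both $p_i$ and $p_j$ terminate in a well-defined final group; call these strings $G_i$ and $G_j$. Each has length $\log(n)+2$, since epoch $0$ already produces the length-one group $s$ and the main loop appends one character per epoch up to $\log(n)+1$. I would then branch on whether the two root-to-leaf paths $G_i, G_j$ diverge or coincide.

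First I would handle the \emph{divergent} case, which is where the real content lies and which is settled almost verbatim by Lemma \ref{lemma:masterdominate}. Let $\sigma$ be the longest common prefix of $G_i$ and $G_j$. Both processes follow identical classifications up to group $\sigma$ and are classified differently there, so without loss of generality $p_i$ enters the master subtree rooted at $\sigma\cdot m$ and $p_j$ enters the slave subtree rooted at $\sigma\cdot s$. Applying Lemma \ref{lemma:masterdominate} to $p_i$ gives that $dec_i$ contains every value of $A(\sigma\cdot s)$ (a master's $V_i$ at epoch $|\sigma|$ already holds all of $A(\sigma\cdot s)$, and no process ever drops a value from its $pro_i$). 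Conversely, every value $p_j$ carries anywhere in the $\sigma\cdot s$-subtree is admissible for its current group and hence, by Lemma \ref{lemma:onceisforever} together with the monotonicity of Lemma \ref{lemma:nonincreasing}, belongs to $A(\sigma\cdot s)$; therefore $dec_j \subseteq \bigoplus A(\sigma\cdot s) \subseteq dec_i$, i.e.\ $dec_j \leq dec_i$.

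Second I would dispose of the \emph{coincident} case $G_i = G_j = G$ using Lemma \ref{lemma:expdecreasing}. With $|G| = \log(n)+2$ its bound reads $|A(G)| - |pro| \leq n/2^{|G|-1} = n/2^{\log(n)+1} = 1/2$, so for every correct process in $G$ the integer $|A(G)| - |pro|$ equals $0$; since a correct process always keeps $pro \subseteq A(G)$ (Lemma \ref{lemma:onceisforever}), equality of cardinalities forces $pro_i = pro_j = A(G)$, whence $dec_i = \bigoplus A(G) = dec_j$ and comparability is immediate.

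I expect the difficulty here to be bookkeeping rather than insight. The one point that genuinely must be gotten right is the length of the final group: it is $\log(n)+2$, not $\log(n)+1$, and this is exactly what makes the per-group lattice collapse to a single point (the bound becomes $1/2$ rather than $1$), ruling out two processes of the same final group deciding two incomparable co-maximal sets. The second point to verify carefully is that, in the divergent case, $p_i$'s domination of $A(\sigma\cdot s)$ survives all of its own subsequent master/slave splits; this holds because neither a master nor a slave ever removes a value from its proposal, so the inclusion established at epoch $|\sigma|$ propagates unchanged all the way to $dec_i$.
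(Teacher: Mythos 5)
Your proof is correct, and while the divergent case ($G_i \neq G_j$) is handled exactly as in the paper—via Lemma \ref{lemma:masterdominate} applied at the deepest common prefix $\sigma$—your treatment of the coincident case $G_i = G_j = G$ takes a genuinely different and cleaner route. The paper argues by contradiction: it takes $v \in dec_i \setminus dec_j$, shows $v$ must have entered $pro_i$ exactly at $p_i$'s \emph{last} master epoch $ep_{last}$, and then does a case analysis on the threshold gap at $ep_{last}$ (if $t_u - t_d \leq 2$, becoming master forces collecting \emph{all} admissible values, so $p_j$ has $v$ too; if $t_u - t_d > 2$, a later epoch with $t_u - t_d = 1$ would force $p_j$ to become master while $p_i$ does not, contradicting $G_i = G_j$). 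You instead apply Lemma \ref{lemma:expdecreasing} directly to the final group: since $|G| = \log(n)+2$, the integer $|A(G)| - |pro|$ is bounded by $1/2$ and hence zero, which together with $pro \subseteq A(G)$ (Lemma \ref{lemma:onceisforever}) pins $pro_i = pro_j = A(G)$ and forces identical decisions. Your version is shorter and realizes literally the paper's own stated intuition that ``each group operates on a lattice constituted by a single point,'' whereas the paper's contradiction argument is more robust to the epoch accounting (it only needs the gap to reach $1$, not $0$, during the run). Two caveats, neither fatal: your argument hinges on the exact epoch count (one fewer epoch, leaving a gap of $1$, would break it, since two processes could then hold $A(G)$ minus two \emph{different} elements), and it requires reading Lemmas \ref{lemma:onceisforever} and \ref{lemma:expdecreasing} as valid for the final group of length $\log(n)+2$, whose $A(G)$ is defined by existence of admissibility proofs rather than by an actual epoch-$|G|$ gradecast—an extension the paper itself also makes implicitly when it invokes $v \in A(G)$ for the final group in its own proof.
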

\begin{proof}
Let $G_i$ be the group where $p_i$ belongs at the end of epoch $r=\log(n)+1$, and let $G_j$ the analogous for $p_j$. 
If $G_i$ and $G_j$ share a common prefix $\sigma$ such that $\sigma \cdot m$ is a prefix of $G_i$ and $\sigma \cdot s$ is a prefix of $G_j$, then Lemma \ref{lemma:masterdominate} shows the comparability. \\
The only case when the above (or the symmetric of the above) does not hold is if $G=G_i=G_j$.
In this case, we will show that $dec_i=dec_j$. Suppose the contrary, then we have that there exists at least a value $v \in dec_i$ and such that $v \not\in dec_j$. By Lemma \ref{lemma:onceisforever} we have $v \in A(G)$, and by Lemma \ref{lemma:nonincreasing} $v$ is in each prefixes of $G$. 
Suppose $p_i$ inserted $v$ in its proposal in an epoch $ep$ such that it has been master again in epoch $ep' > ep$, however this implies that also $p_j$ is master in $ep'$, and $p_i$ being  correct in epoch $ep'$ it gradecasts $v$ that will be included in the proposal of $p_j$. 
The above implies that $v$ has been received and inserted by $p_i$ in its proposal exactly in the last epoch in which $p_i$ became master. 

Let $ep_{last}$ be such an epoch. Note that if $t_u-t_{d} \leq 2$ then $v$ is also in the proposal of $p_j$ (both processes enter in the master group in $ep_{last}$):  in case $t_u-t_d=2$ to become master each process has to collect enough values to trespass the threshold $t_m$, recall that $t_m=t_d+1$ and thus it has to collect $t_u$ values, but those and are all the admissible values (by Lemma \ref{lemma:expdecreasing}). In the other case, when $t_u-t_d=1$, a process has also to collect all admissible values ($t_m=t_d$ and $t_u=t_m+1$) (by Lemma \ref{lemma:expdecreasing}).

Therefore,   $t_u -t_{d} > 2$ in $ep_{last}$, however by the structure of the algorithm and the number of epochs being $\log(n)+1$ we eventually have an epoch $ep' > ep_{last}$ such that $t_u-t_{d} =1$ and $t_m=t_d$, when this happens process $p_j$ will become master upon receipt of $v$ from $p_i$ (by Lemma \ref{lemma:onceisforever} $v$ is admissible for $p_j$). This contradicts the fact that $G_i =G_j$, since $p_i$ is never again a master after epoch $ep_{last}$.
\end{proof}

\noindent
From previous lemmas we have:

\begin{theorem}
Given the lattice $L$ constituted by the power set of the first $n$ natural numbers with union as join operation, GAC is a correct LA algorithm on $L$, that terminates in ${\cal O}(\log(n))$ rounds and tolerates up to $n/3-1$ Byzantine processes. 
\end{theorem}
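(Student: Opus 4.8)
The plan is to derive the theorem as a corollary of the lemmas established in Section \ref{gac:correctness}, checking each of the five Lattice Agreement properties of Section \ref{definition:la} together with the round-complexity and resiliency claims. Since the bulk of the technical work is already contained in those lemmas, the argument is mostly one of assembly: I would match each property to the lemma that certifies it, and then account separately for the complexity and the fault bound.

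For \textbf{Liveness} and \textbf{Stability}, I would appeal to the termination observation: every correct process reaches epoch $\log(n)+1$, at which point the loop of Line \ref{alg:forcycle} exits and the process executes \textsc{decide}$(\bigoplus pro_i)$ exactly once; hence each correct process outputs one, and only one, decision value. For \textbf{Inclusivity} and \textbf{Non-Triviality}, I would invoke the lemma showing that a deciding correct process $p$ satisfies $pro_i \le dec_i$ and $dec_i \le \bigoplus(X \cup B)$ with $|B| \le f$; the bound $|B|\le f$ there rests on the role of epoch $0$, in which a Byzantine can attach a seen-all proof to at most one element of $E$ (Observation \ref{obs:byz}), combined with the monotonicity of admissible sets (Lemma \ref{lemma:nonincreasing}). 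For \textbf{Comparability}, I would directly cite the final comparability lemma, which splits into the case where the two decision groups diverge at some common prefix $\sigma$ (settled by the master-dominance Lemma \ref{lemma:masterdominate}) and the case $G_i = G_j$ (settled using Lemmas \ref{lemma:onceisforever} and \ref{lemma:expdecreasing} to force equality of the two decisions).

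For the \textbf{round complexity}, I would count the cost per epoch: one epoch is a single provable gradecast, which runs in a constant number (three) of rounds, and the algorithm performs epoch $0$ followed by the $\log(n)+1$ iterations of the loop, giving $3(\log(n)+2) = {\cal O}(\log(n))$ rounds in total. For the \textbf{resiliency}, I would observe that every ingredient — the gradecast guarantees, the seen-all property of Observation \ref{obs:byz}, and hence the whole admissibility machinery — is valid precisely under $n \ge 3f+1$, equivalently $f < n/3$; for the presented case $n = 3f+1$ this realizes the claimed bound, and the same counting goes through verbatim for any $n > 3f+1$.

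The step I expect to require the most care is the resiliency claim, because it is not an independent argument but is threaded implicitly through the gradecast quorums of size $n-f = 2f+1$ and through Observation \ref{obs:byz}: I would make explicit that the $n-2f \ge f+1$ slack used there is exactly what caps the number of Byzantine-admissible values at $f$ in epoch $0$, and that this cap then propagates unchanged through every later epoch by Lemma \ref{lemma:nonincreasing}. Tying the resiliency bound to the non-triviality bound in this way, rather than treating them in isolation, is the one place where the assembly amounts to more than a mechanical citation of the prior lemmas.
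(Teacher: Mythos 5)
Your proposal is correct and takes essentially the same route as the paper: there the theorem is stated with no separate argument beyond ``From previous lemmas we have,'' and your property-by-lemma matching (termination observation for Liveness/Stability, the dedicated lemma for Inclusivity/Non-Triviality, the final comparability lemma resting on Lemmas \ref{lemma:masterdominate}, \ref{lemma:onceisforever} and \ref{lemma:expdecreasing}), together with the $3(\log(n)+2)$ round count and the $n\ge 3f+1$ resiliency accounting, is precisely that implicit assembly made explicit.
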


\section{Adapting GAC to work on arbitrary semi-lattices in $\log(f)$ rounds}\label{sec:general}

We first explain how to adapt the GAC algorithm to work in $\log(f)$ on $L_{n}$ when each correct proposes a different unique value in $\{\{1\},\{2\},\ldots, \{n\}\}$. We call such an algorithm GAC$_{fast}$, we then discuss how to adapt  GAC$_{fast}$ to work on a generic join semi-lattice dropping the assumption of different proposal values. 
The main idea is to modify epoch $0$ to satisfy two needs: (1) to force Byzantines to commit to a certain value; (2) to make all processes collect at least $n-f$ different proposal values. 
This allows the thresholds to be set to $t_d=n-f, t_u=n, t_m=\lfloor n-\frac{f}{2} \rfloor$ in all processes at the end of epoch $0$. The modified epoch $0$ is Algorithm \ref{alg:epoch0}.
The code follows the old one with the notable exceptions that: a correct process updates its proposal by including all values, in $E$, that have been seen with rank at least $2$, and it updates its thresholds accordingly.

\begin{algorithm*}
\caption{ {\bf GAC}$_{fast}$: Collect and Commit Epoch $0$  - Algorithm for process $p_i$} \label{alg:epoch0}
\footnotesize

\begin{algorithmic}[1]
\algrenewcommand\algorithmicprocedure{\textbf{function}}

\State $G=\epsilon$ 
\State $pro_i$ \Comment{proposed value.}
\State $Proofs=\{\}$

\smallskip
\Procedure{LA-Propose}{$pro_i$}
\State $pro_i=pro_i$
\State {\sc gradecast}$(M=(pro_i,G,\bot))$  \Comment{Epoch 0 - start}
\State $P_i=${\sc rcv}()
\State $V_i = \{\forall v \in M | M \in P_i \land M$  rank is equal 2 $\land v \in E\}$
\State $G=G \cdot s$
\State $pro_i= V_i$
\State $updateproofs(pro_i,P_i,0)$   \Comment{Epoch 0 - end}
\State $t_d=n-f,t_u=n, t_{m}=t_d+\lfloor \frac{t_u-t_d}{2} \rfloor$
\State $\ldots$   \Comment{Remain as Algorithm \ref{gac:algorithm} but for line \ref{alg:forcycle}}
\EndProcedure
\end{algorithmic}
\end{algorithm*}

The remaining of the algorithm is the same as Algorithm \ref{gac:algorithm} but for line \ref{alg:forcycle} where we have $ep \in [1 \ldots, \log(f)+1]$.

\paragraph{Correctness discussion}
The same lemmas and observations of Section \ref{gac:correctness} hold with the following exceptions: 

\begin{Lemma} \label{lemma:expdecreasingf}
For any correct process $p_i \in G$ with $G \neq \epsilon$ we have $|A(G)| -| pro_i|  \leq f/2^{|G|-1}$. Moreover, given process $p_i \in G$ we have $|A(G)| \leq t_u$ and $pro_i \geq t_d$, where $t_u$ and $t_d$ are the thresholds of group $G$. 
\end{Lemma}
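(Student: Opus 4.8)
The plan is to prove the statement by induction on the group length $|G|$, reusing almost verbatim the argument of Lemma \ref{lemma:expdecreasing}. The only genuine difference is the base case: in the fast algorithm the modified epoch $0$ (Algorithm \ref{alg:epoch0}) no longer leaves the thresholds at $t_d=0,t_u=n$ but tightens them to $t_d=n-f$, $t_u=n$, $t_m=t_d+\lfloor\frac{t_u-t_d}{2}\rfloor$. Consequently the initial gap is $t_u-t_d=f$ rather than $n$, and this is exactly what downgrades the bound from $n/2^{|G|-1}$ to $f/2^{|G|-1}$. As in the original lemma, I would actually prove the three invariants $|A(G)|\le t_u$, $|pro_i|\ge t_d$, and $t_u-t_d\le f/2^{|G|-1}$ simultaneously, since the first two are what make the gap bound propagate, and the target inequality $|A(G)|-|pro_i|\le f/2^{|G|-1}$ follows from them.

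For the base case $|G|=1$, i.e.\ $G=s$ at the start of epoch $1$, I would verify the two cardinality invariants directly from the code of the modified epoch $0$. The bound $|A(s)|\le t_u=n$ is immediate, since every admissible value is an element of $E\subseteq\{\{1\},\dots,\{n\}\}$ and hence $|A(s)|\le|E|\le n$. For $|pro_i|\ge t_d=n-f$ I would invoke the validity property of the provable gradecast: since every correct sender is ranked $2$ by every correct process, and by hypothesis each of the (at least) $n-f$ correct processes proposes a distinct value of $E$, process $p_i$ delivers at least $n-f$ distinct values with rank exactly $2$ and therefore inserts all of them into $pro_i=V_i$. This gives $|A(s)|-|pro_i|\le t_u-t_d=f=f/2^{|G|-1}$, closing the base case.

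The inductive step I would carry over unchanged from the proof of Lemma \ref{lemma:expdecreasing}, because the threshold update rules and the admissibility mechanics are identical in GAC and GAC$_{fast}$. Writing $\sigma$ for the parent group and assuming $|A(\sigma)|\le t_u$, $|pro_i|\ge t_d$, $t_u-t_d\le f/2^{|\sigma|-1}$, the two extensions behave as follows. If $G=\sigma\cdot m$ the admissible set and $t_u$ are unchanged, a master raises its proposal past $t_m$, so the new lower threshold is $t'_d=t_d+\lfloor\frac{t_u-t_d}{2}\rfloor+1$ and $t_u-t'_d\le\frac{t_u-t_d}{2}\le f/2^{|\sigma|}=f/2^{|G|-1}$. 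If $G=\sigma\cdot s$ then a slave had $|V_i|\le t_m$, no value outside $V_i$ can remain admissible in $G$, so $|A(G)|\le t_m=t'_u$ while $t'_d=t_d$, giving $t'_u-t'_d\le t_m-t_d\le\frac{t_u-t_d}{2}\le f/2^{|G|-1}$. In both cases the cardinality invariants are preserved and the gap halves, which is exactly the inductive claim.

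The main (and essentially only) obstacle is the base case, and specifically the lower bound $|pro_i|\ge n-f$: this is the one place where the fast variant departs from GAC, and it relies on two ingredients that do not appear in Lemma \ref{lemma:expdecreasing}, namely the assumption that correct processes hold pairwise distinct singletons of $E$ and the rank-$2$ validity guarantee of the gradecast that forces all of those singletons into every correct $V_i$. Once this is in place the starting gap is pinned to $f$, and the remainder is a mechanical replay of the original halving argument with $n$ replaced by $f$.
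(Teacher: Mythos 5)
Your proof is correct and takes essentially the same approach as the paper's: the same simultaneous induction on $G$ over the three invariants $|A(G)|\le t_u$, $|pro_i|\ge t_d$, $t_u-t_d\le f/2^{|G|-1}$, the same halving argument in both inductive cases, and a base case resting on the modified epoch $0$, the rank-$2$ gradecast guarantee for correct senders, and the at least $n-f$ correct processes with distinct proposals. The only difference is that you spell out in detail the base case that the paper dismisses as immediate, which is a welcome addition rather than a divergence.
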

\begin{proof}
We will show that for each $p_i \in G$ it holds $
 |A(G)| \leq t_u$ that $|pro_i| \geq t_d$ and that $ t_u - t_d\leq f/2^{|G|-1}$. Recall that $t_u,t_d$ and $t_m$ depends on $G$. The proof is by induction on $G$.
\begin{itemize}
\item Base case: $G=s$: it derives immediately from the structure of epoch $0$, the property of the gradecast, the fact that there are at least $n-f$ correct processes. At the end of epoch $0$ we have $t_d=n-f$ and $t_u=n$ for all processes in $G$. 
\item Inductive case: By inductive hypothesis the claim holds for $\sigma$. We have $G=\sigma \cdot m/(\sigma \cdot s)$. Let $t_u,t_m,t_d$ be the thresholds of group $\sigma$, by ind. hypothesis we have  $
 |A(\sigma)| \leq t_u$ that $|pro_i| \geq t_d$ and that $ t_u - t_d\leq f/2^{|G|-1}$.
\begin{itemize}
\item Case $G= \sigma \cdot m$.  First notice that for master processes the set of admissible values does not change, that is $A(\sigma)=A(G)$ neither the threshold $t_u$. In such a case we will show that the lattice ``shrinks from below" in the sense
that by updating the lower bound on $pro_i$ our claim holds. Since $p_i$ is in $G$ we have that it updates $pro_i$ and the new $pro_i$ contains a number of elements that are at least $t_{d}+\lfloor \frac{t_{u}-t_{d}}{2} \rfloor+1$, thus the new $t'_{d}=t_d+ \lfloor \frac{t_{u}-t_{d}}{2} \rfloor+1$. 
By immediate algebraic manipulations we have $t_u - t'_d \leq \frac{t_u-t_d}{2}$ that proves our claim.  
\item Case $G= \sigma \cdot s$. In such a case we will show that the lattice ``shrinks from above". Consider the generic $p_i \in G$ this implies that at the end of epoch $|\sigma|$ the set $V_i$ contained at most $t_m$ elements. It is immediate that
since $p_i$ is correct each value in $A(\sigma)$ that is not in $V_i$ cannot be admissible in the extension $G$. Therefore we have $|A(G)| \leq t_m $, now it remains to show that $|pro_i| \geq t_d$ but this is immediate from inductive hypothesis. 
Thus we have $t_u'=t_m$ and $t'_d=t_d$, that implies $t_u'-t_d' \leq t_m - t_d\leq \frac{t_u-t_d}{2}$.  
\end{itemize}
\end{itemize}
\end{proof}

\begin{observation}
Any correct process decides at epoch $\log(f)+1$.
\end{observation}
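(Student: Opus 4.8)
The plan is to read termination directly off the loop structure of GAC$_{fast}$ and then confirm, via Lemma~\ref{lemma:expdecreasingf}, that $\log(f)+1$ epochs genuinely suffice to collapse each group to a single point. First I would note that, after the modified epoch~$0$ of Algorithm~\ref{alg:epoch0}, every correct process enters the main loop which ranges over $ep \in [1,\ldots,\log(f)+1]$; in each such epoch a correct process runs exactly one provable gradecast, and since every gradecast instance terminates in three rounds (its liveness being inherited from the original primitive~\cite{Ben-or:2010}), every correct process reaches the end of epoch $\log(f)+1$ and there invokes {\sc decide}. Hence decision at epoch $\log(f)+1$ is immediate from the boundedness of the loop together with the liveness of the underlying gradecast.

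The substantive step is to justify that this is the right number of epochs, i.e.\ that by epoch $\log(f)+1$ every correct process is pinned to a single admissible value so that its decision $\bigoplus pro_i$ is well defined. For this I would invoke Lemma~\ref{lemma:expdecreasingf}: the collect-and-commit epoch~$0$ sets the thresholds to $t_d=n-f$ and $t_u=n$, so the initial gap is $t_u-t_d=f$, and the lemma guarantees $t_u-t_d \leq f/2^{|G|-1}$ for every correct $p_i \in G$ with $G \neq \epsilon$. Since each epoch $ep \geq 1$ appends one symbol to $G$ (and epoch~$0$ already took $G$ from $\epsilon$ to $s$), a process that has just completed epoch $\log(f)+1$ belongs to a group of length $|G| = \log(f)+2$, whence
\[
t_u - t_d \;\leq\; \frac{f}{2^{\log(f)+1}} \;=\; \frac{1}{2} \;<\; 1 .
\]
As $t_u-t_d$ is a nonnegative integer, it must equal $0$, so $t_u=t_d$; combining this with the bounds $t_d \leq |pro_i| \leq |A(G)| \leq t_u$ of the same lemma forces $|pro_i|=|A(G)|=t_d$, i.e.\ the group operates on a single lattice point and the decided value is final.

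I expect the only delicate point to be the off-by-one bookkeeping of the group length: one must keep in mind that epoch~$0$ already appends the first symbol, so that after the $\log(f)+1$ iterations of the main loop the group has length $\log(f)+2$ rather than $\log(f)+1$, which is precisely what makes the exponent in Lemma~\ref{lemma:expdecreasingf} large enough to push the gap strictly below one. Everything else is a verbatim transcription of the reasoning already used for the analogous observation in the $\log(n)$-round GAC, with $n$ systematically replaced by $f$ owing to the shrunken initial gap established in epoch~$0$.
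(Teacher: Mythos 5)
Your proposal is correct, and its first paragraph is exactly the argument the paper (implicitly) relies on: the observation is stated without proof because it follows directly from the loop structure of GAC$_{fast}$ --- the main loop runs over epochs $1,\ldots,\log(f)+1$, each epoch is a single three-round gradecast whose liveness is inherited from \cite{Ben-or:2010}, and synchrony ensures every correct process completes the loop and invokes {\sc decide}. Your second, ``substantive'' part --- invoking Lemma~\ref{lemma:expdecreasingf} with $|G|=\log(f)+2$ to get $t_u-t_d\leq f/2^{\log(f)+1}=1/2<1$, hence $t_u=t_d$ and $pro_i=A(G)$ --- is also correct (including the off-by-one bookkeeping on $|G|$), but it is not needed for this observation: the decision $\bigoplus pro_i$ is well defined whether or not the group has collapsed to a single lattice point, and that collapse argument is instead the ingredient the paper deploys inside the comparability lemma (the case analysis on $t_u-t_d\leq 2$ and the eventual epoch with $t_m=t_d$). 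So you have proved slightly more than was asked, at no cost to correctness; the only hand-wave, $2^{\log f}=f$, is one the paper itself makes.
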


\begin{theorem}
Given the lattice $L$ constituted by the power set of the first $n$ natural numbers with union as join operation and where each correct process proposes a distinct element in $\{\{1\},\ldots, \{n\}\}$, GAC$_{fast}$ is a correct LA that terminates in ${\cal O}(\log(f))$ rounds and tolerates up to $n/3-1$ Byzantine processes. 
\end{theorem}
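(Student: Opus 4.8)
The plan is to derive correctness of GAC$_{fast}$ by reusing the machinery of Section~\ref{gac:correctness} almost verbatim, the only structural differences being the modified epoch $0$ and the shortened main loop running to $\log(f)+1$. First I would establish the epoch-$0$ invariant on which everything rests. Since each correct process gradecasts a distinct element of $\{\{1\},\ldots,\{n\}\}$ and a correct sender is ranked $2$ by every correct receiver, each correct process collects all $\geq n-f$ correct proposals into its updated $pro_i$, so $|pro_i|\geq n-f$; meanwhile admissibility for group $s$ (a seen-all proof from epoch $0$) bounds $|A(s)|\leq n$. This both justifies the assignment $t_d=n-f$, $t_u=n$ at the end of epoch $0$ and, crucially, shows that the sub-lattice on which epochs $1,\ldots,\log(f)+1$ operate has height at most $f$ instead of $n$.

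For Liveness and the running time I would invoke Lemma~\ref{lemma:expdecreasingf}: it gives $t_u-t_d\leq f/2^{|G|-1}$ for every correct process in a group $G\neq\epsilon$, so within $\log(f)+1$ epochs the gap shrinks to at most $1$ and each group is pinned to a single admissible point; as recorded by the observation immediately preceding the theorem, every correct process then decides $\bigoplus pro_i$ at epoch $\log(f)+1$. Counting the three rounds of each gradecast over the $\log(f)+2$ epochs gives the ${\cal O}(\log f)$ round bound. Stability is immediate, since a correct process executes {\sc decide} exactly once.

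Comparability, Inclusivity and Non-Triviality I would then obtain directly from the corresponding lemmas of Section~\ref{gac:correctness}, which carry over to GAC$_{fast}$ because epochs $1,\ldots,\log(f)+1$ execute exactly the code of Algorithm~\ref{gac:algorithm} and none of those proofs depends on the numeric value of the thresholds. For two correct processes $p_i,p_j$ I would split on their final groups: if a common prefix $\sigma$ separates them into $\sigma\cdot m$ and $\sigma\cdot s$, Lemma~\ref{lemma:masterdominate} yields comparability; if they share the same final group $G$, the single-point argument built on Lemmas~\ref{lemma:nonincreasing} and~\ref{lemma:onceisforever} forces $dec_i=dec_j$. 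Inclusivity holds because a correct process's own values remain admissible in every epoch and are never discarded by a master, and Non-Triviality follows from $B_{ep+1}\subseteq B_{ep}$ together with $|B_0|\leq f$, the latter because a single Byzantine gradecast in epoch $0$ can make at most one value admissible. The resiliency $f<n/3$ is inherited from the $n\geq 3f+1$ requirement of the gradecast primitive and matches the lower bound of Theorem~\ref{th:impossible}, so it is tight.

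The step I expect to demand the most care is the epoch-$0$ invariant of the first paragraph: I must verify that collecting only the rank-$2$ values still lets every admissible value reach all correct processes by epoch $1$ — a value held by any correct process is re-gradecast together with its seen-all proof and therefore enters every correct $V_i$ — so that the same-group case of Comparability, which relies on all correct processes converging on the identical admissible set, is not broken by correct processes beginning epoch $1$ with slightly different proposals.
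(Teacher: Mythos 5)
Your proof is correct and follows essentially the same route as the paper: the paper's own argument is exactly to reuse the lemmas of Section~\ref{gac:correctness} unchanged, substituting Lemma~\ref{lemma:expdecreasingf} (whose base case $t_u-t_d\leq f$ comes from the modified epoch $0$ with $t_d=n-f$, $t_u=n$) and the observation that every correct process decides at epoch $\log(f)+1$. Your closing check---that correct processes entering epoch $1$ with slightly different proposals (since rank-$2$ sets may differ on Byzantine-sourced values) does not break the same-group comparability argument---is a subtlety the paper leaves implicit, and you resolve it correctly via re-gradecasting with seen-all proofs.
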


\subsection{Arbitrary Semi-lattices $L_{A}$}
We adapt GAC$_{fast}$ to work on  an arbitrary join semi-lattice $L_{A}=(V_{A},\oplus)$, an arbitrary set $E_{A} \subseteq V_{A}$ of allowed proposal values, and an arbitrary mapping of proposal values and correct processes (recall that in previous section we were assuming a different proposal value for each correct). The adaptation works by running GAC$_{fast}$ on an intermediate semi-lattice $L^{*}$. Lattice $L^{*}$ is the one induced by the union operation over the power set of $V^{*}=\Pi \times E_{A}$. The set $V^{*}$ is constituted by all possible pairs process ID and initial proposed value $pro_i$ (each $pro_i$ is in $E_{A}$). Each correct process $p_i$ starts GAC$_{fast}$ with input $(p_i, pro_i)$.
\begin{algorithm*}
\caption{ ${\cal O}(\log f)$ Lattice Agreement Algorithm for arbitrary join semi-lattice - Algorithm for process $p_i$} \label{alg:final}
\footnotesize

\begin{algorithmic}[1]
\algrenewcommand\algorithmicprocedure{\textbf{function}}

\medskip
\Procedure{\bf LA-Propose}{$pro_i$}
\State Trigger GAC$_{fast}$ with proposal value $(p_i,pro_i)$
\State wait until GAC$_{fast}$ terminates with output $dec_i$
\State $X:\{y | (x,y) \in dec_i \land y \in E_{A}\}$
\State $dec'_i=\bigoplus X$
\State {\bf return} $dec'_i$
\EndProcedure

\end{algorithmic}
\end{algorithm*}

Note that epoch $0$, with is commitment functionality, forces the algorithm to effectively decides on a lattice $L^{*}$ that is the power set of a subset $X$ of $V^{*}$ of cardinality at most $n$ and at least $n-f$. 
Such lattice is isomorphic to the lattice on which the correctness of GAC$_{fast}$ has been shown in Section \ref{sec:general}. 

Once GAC$_{fast}$ terminates each process $p_i \in C$ has a decision $dec_i$. This decision $dec_i$ is a set $\{ (p_i,val_{i}), \ldots, (p_x,val_{x}) \}$, from such a set the process $p_i$ obtains a decision $dec'_i$ on $L_{A}$ where $dec'_i$ is $dec'_i = \bigoplus D_i$ given $D_i: \{y | \forall (x,y) \in dec_i  \land y \in E_{A}\}$. This strategy enforces that the decisions $dec'_i$ and $dec'_j$ of any two correct processes $p_i,p_j$ are comparable points on the semi-lattice $L_{A}$.  It is also immediate that non-triviality and inclusivity hold.  The formal pseudocode of the reduction is Algorithm \ref{alg:final}.
From the above we have:
 \begin{theorem} Given $f$ Byzantine processes and $n$ processes in total,
if $n \geq 3f+1$, there exists a Byzantine lattice agreement algorithm terminating in ${\cal O}(\log f)$ rounds in the authenticated message model. 
\end{theorem}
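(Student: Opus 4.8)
The plan is to derive all five Lattice Agreement properties on the arbitrary semi-lattice $L_{A}$ directly from the correctness of GAC$_{fast}$ on the intermediate lattice $L^{*}$, treating the projection step of Algorithm~\ref{alg:final} as an order-preserving map. First I would justify that GAC$_{fast}$ is legitimately applicable on $L^{*}$. Although the ground set $V^{*}=\Pi\times E_{A}$ may be large and several correct processes may share the same $E_{A}$-component, the inputs $(p_i,pro_i)$ are pairwise distinct because their first coordinates are the distinct process IDs. Using the commitment functionality of the modified epoch $0$ (Algorithm~\ref{alg:epoch0}) together with the gradecast bound $|B_{0}|=f$, only the values carried by the $n$ gradecast instances can survive as admissible, so the effective ground set is a set $X\subseteq V^{*}$ of distinct pairs with $n-f\le|X|\le n$. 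Mapping $X$ bijectively onto $\{1,\dots,|X|\}$ exhibits this restricted lattice as isomorphic to $L_{|X|}$ with each correct process proposing a distinct singleton, which is exactly the setting in which the GAC$_{fast}$ theorem of Section~\ref{sec:general} was established. Hence GAC$_{fast}$ terminates in ${\cal O}(\log f)$ rounds and its output $dec_i$ on $L^{*}$ satisfies Liveness, Stability, Comparability, Inclusivity, and the bounded Non-Triviality $dec_i\subseteq\bigoplus(X^{*}\cup B^{*})$ with $X^{*}=\{(p_x,pro_x)\mid p_x\in C\}$ and $|B^{*}|\le f$.

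Next I would push each property through the projection $dec_i\mapsto dec'_i=\bigoplus D_i$, where $D_i=\{y\mid (x,y)\in dec_i\land y\in E_{A}\}$. The central observation is that this projection is monotone: if $dec_i\subseteq dec_j$ then $D_i\subseteq D_j$ and therefore $\bigoplus D_i\le\bigoplus D_j$ in $L_{A}$. Comparability of $dec_i$ and $dec_j$ on $L^{*}$ thus transfers to comparability of $dec'_i$ and $dec'_j$ on $L_{A}$. Inclusivity follows because $(p_i,pro_i)\in dec_i$ and $pro_i\in E_{A}$ force $pro_i\in D_i$, so $pro_i\le\bigoplus D_i=dec'_i$. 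For Non-Triviality I would project the containment $dec_i\subseteq\bigoplus(X^{*}\cup B^{*})$: each pair in $dec_i$ is either a correct input, contributing some value $pro_x$ with $p_x\in C$, or one of the at most $f$ Byzantine-committed pairs of $B^{*}$, whose $E_{A}$-components form a set $B\subseteq E_{A}$ with $|B|\le f$; hence $dec'_i\le\bigoplus(\{pro_x\mid p_x\in C\}\cup B)$, as required. Liveness and Stability are immediate, since the projection is a deterministic local computation applied once GAC$_{fast}$ has decided.

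Finally, the round complexity is inherited unchanged: GAC$_{fast}$ runs in ${\cal O}(\log f)$ rounds and Algorithm~\ref{alg:final} adds only local post-processing, so the whole construction terminates in ${\cal O}(\log f)$ rounds under $n\ge 3f+1$ in the authenticated-message model. I expect the main obstacle to be the first paragraph's reduction: carefully arguing that the nominally huge lattice $L^{*}$ collapses, after epoch $0$, to one isomorphic to $L_{|X|}$, so that the GAC$_{fast}$ analysis --- which crucially relied on the height being bounded by $n$ and on proposals being distinct singletons --- applies verbatim. The rest amounts to routine monotonicity and cardinality bookkeeping across the projection.
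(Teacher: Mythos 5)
Your proof is correct and follows essentially the same route as the paper: run GAC$_{fast}$ on the intermediate lattice $L^{*}$ over $\Pi \times E_{A}$, use the epoch-$0$ commitment to collapse the effective ground set to a subset $X$ with $n-f \le |X| \le n$ (so the restricted lattice is isomorphic to the one for which GAC$_{fast}$ was proven correct), and project each decision to $L_{A}$ via the monotone map $dec_i \mapsto \bigoplus D_i$. The only difference is one of rigor, not of method: where the paper declares comparability, inclusivity and non-triviality "immediate," you spell out the monotonicity of the projection and the cardinality bookkeeping for the Byzantine set $B$, which fills in exactly the steps the paper leaves implicit.
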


\subsubsection{Message Complexity}
The provable gradecast generates at most ${\cal O}(n^2)$ messages at each round. Each epoch is composed by $3$ rounds and in each epoch all correct processes do a gradecast, thus we have a total of ${\cal O}(n^3 \log f)$ messages. However, as noted also in \cite{Zheng:aa}, it is possible to use ${\cal O}(n^2)$ messages in total to run $n$ parallel instances of gradecast (each message will be structured with $n$ locations one for each possible gradecaster).   Therefore,  our algorithm can be implemented using ${\cal O}(n^2 \log f)$ messages.

\section{Trade-off Between Signatures and Number of Processes}\label{sec:nosig}
 
Signatures are used to implement the seen-all proof of our provable gradecast primitive (explained in Section \ref{pgc:sec}). By assuming $n \geq 4f+1$ processes we may implement an interactive version of the seen-all proof that does not use signatures. We explain the interactive provable gradecast algorithm when $n=4f+1$, the extension for $n > 4f+1$ is immediate. The modifications with respect to Section \ref{pgc:sec} are as follows:
\begin{itemize}
\item The threshold of line \ref{pgc:freqr1} remains $f+1$, the one of line \ref{pgc:freqr2} becomes $3f+1$, the threshold of line \ref{pgc:relay} is $3f+1$. Therefore, a message will have rank $1$ if seen with multiplicity at least $f+1$, a message has rank $2$ if seen with multiplicity at least $3f+1$. 
\item The seen-all proof $S_{p_s,m_i}$ is simply a set of IDs. These IDs are the ones of processes from which $p_i$ receives $m_i$ in the receive phase of round 3. 
\end{itemize}

The seen-all proof $S_{p_s,m_i}$  is checked in an interactive way by querying each process contained in the set $S_{p_s,m_i}$. The proof passes if at least $2f+1$ of such processes confirm to have relayed $m_i$ in the relevant gradecast instance. It is obvious that by increasing the two thresholds we do not affect the original properties of the gradecast (discussed in Section \ref{pgc:sec}).
 
\begin{observation}\label{obs:byznosig}
Consider an instance of interactive provable gradecast with source $p_s$. 
If a process $p_i$, whether Byzantine or correct, can produce a passable {\em seen-all} proof for a message $m$, then each process $p \in C$ delivered message $m$ with rank at least $1$ at the end of the gradecast instance. 
\end{observation}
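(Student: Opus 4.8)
The plan is to mirror the proof of Observation~\ref{obs:byz}, replacing the signature-based step by a counting argument that exploits the larger resiliency $n = 4f+1$ and the raised thresholds. The whole proof reduces to two counting steps: first, extracting from a passable proof a set of genuinely \emph{correct} round-3 relayers of $m$; second, showing that correct processes never relay two distinct messages at round~3, so that $m$ dominates at every correct receiver.

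First I would unpack what a passable seen-all proof guarantees. By definition $S_{p_s,m}$ is a set of IDs, and it is accepted only if, when each listed process is queried, at least $2f+1$ of them confirm having relayed $m$ at the beginning of round~3. Since at most $f$ of the queried processes are Byzantine and a correct process answers truthfully, at least $2f+1 - f = f+1$ of the confirming processes are correct and did in fact relay $m$ at round~3. Note this is precisely where the $2f+1$ acceptance threshold is needed: a Byzantine $p_i$ cannot forge the proof by listing only the (at most $f$) Byzantine IDs.

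Next I would argue that all correct processes relaying at round~3 relay the \emph{same} message. A correct process relays only after receiving that message from at least $3f+1$ distinct sources in round~2 (the raised relay threshold of line~\ref{pgc:relay}); with $n = 4f+1$ this forces at least $3f+1 - f = 2f+1$ of those sources to be correct. Because a correct process broadcasts a single fixed message in round~2, the sets of correct round-2 senders backing two distinct round-3 relays must be disjoint. Two distinct relayed messages would therefore require at least $2f+1 + 2f+1 = 4f+2 > n$ correct senders, a contradiction. Hence the unique message relayed by correct processes at round~3 is exactly $m$, and from the previous step at least $f+1$ correct processes relay it.

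Finally I would conclude at an arbitrary correct receiver $p \in C$. Each of the $\geq f+1$ correct relayers broadcasts $m$ to everyone, so $p$ receives $m$ from at least $f+1$ distinct sources; every message $m' \neq m$ is relayed only by Byzantine processes and hence reaches $p$ from at most $f$ sources. Thus $m$ is the \emph{uniquely} most frequent message $p$ sees, with multiplicity at least $f+1$, so $p$ selects $m$ and assigns it rank at least $1$. The main obstacle is the middle step: one must pin down that a correct process sends a single fixed message in round~2, so that the two $2f+1$-sized correct-support sets are genuinely disjoint and the count $4f+2 > 4f+1$ goes through; once that is established, the final multiplicity bookkeeping at the receiver is routine.
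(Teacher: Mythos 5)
Your proof is correct and follows essentially the same route as the paper's: extract at least $f+1$ correct round-3 relayers of $m$ from the $2f+1$ confirmations (since at most $f$ confirmers can be Byzantine), then observe that these correct relayers broadcast $m$ to everyone, giving every correct receiver multiplicity at least $f+1$ and hence rank at least $1$. In fact your middle step --- using the raised $3f+1$ relay threshold and a disjointness count ($2f+1 + 2f+1 = 4f+2 > n$) to show that no two correct processes relay distinct messages at round~3, so that $m$ is genuinely the most frequent message at each correct receiver --- is more careful than the paper's own terse proof of this observation, which leaves the ``most frequent'' requirement implicit and spells out the analogous argument only in the proof of Observation~\ref{obs:byz} for the signature-based variant.
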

\begin{proof}  In the interactive provable gradecast if a proof produced by process $p_i$ is passable, then it implies that at least a set of $f+1$ correct processes have relayed message $m$ to $p_i$ at the end of round $3$ of the gradecast.  Being these processes correct they will relay $m$ to all other processes in the system. Therefore, any other process has received $m$ with frequency at least $f+1$ at the end of round $3$. This implies that each correct process delivers $m$ with rank at least $1$.  \end{proof}

Moreover, we have the following observation:
\begin{observation}\label{obs:byzcorrnosig} 
Consider an instance of interactive provable gradecast with source $p_s$. 
Given a correct process $p_i$, if $p_i$ receives a message $m$ with rank $2$ then the corresponding proof is passable. 
\end{observation}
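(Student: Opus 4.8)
The plan is to argue directly from the rank-$2$ threshold together with a simple counting argument, mirroring the reasoning used for Observation \ref{obs:byznosig}. First I would unfold the definition of rank $2$ in the interactive model: a correct process $p_i$ assigns rank $2$ to $m$ only when it receives $m$ from at least $3f+1$ distinct sources during the receive phase of round $3$. Since in this variant the seen-all proof $S_{p_s,m}$ is defined as exactly the set of IDs of those sources, this immediately yields $|S_{p_s,m}| \geq 3f+1$.

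Next comes the counting step. Because at most $f$ of the $n = 4f+1$ processes are Byzantine, among the (at least) $3f+1$ IDs recorded in $S_{p_s,m}$ at least $3f+1 - f = 2f+1$ belong to correct processes. I would stress that each such correct process appears in $S_{p_s,m}$ precisely because it sent $m$ to $p_i$ at the beginning of round $3$; since a correct process broadcasts (rather than unicasts) its relay, this means it genuinely relayed $m$ in this gradecast instance to the whole system.

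Then I would invoke the honesty of correct processes during the interactive verification phase: when a correct process that did relay $m$ is queried, it confirms having relayed $m$. Hence at least the $2f+1$ correct members of $S_{p_s,m}$ confirm, which is exactly the passability threshold, so the proof is passable.

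The step I expect to require the most care is the alignment between ``being a source of $m$ for $p_i$ in round $3$'' and ``confirming the relay of $m$ when later queried'': I must ensure that a correct source necessarily broadcast $m$ to everyone, so that its round-$3$ relay is well defined and re-confirmable, and that the $2f+1$ confirmations are obtained using correct processes \emph{alone}. This guarantees that Byzantine members of $S_{p_s,m}$ cannot prevent passability even if they later deny relaying. This observation is the natural dual of Observation \ref{obs:byznosig}, and the two together provide the soundness and completeness of the interactive seen-all proof needed to replace signatures at the cost of the weaker resiliency $n \geq 4f+1$.
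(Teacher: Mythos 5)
Your proof is correct: the paper actually states Observation \ref{obs:byzcorrnosig} without any proof, treating it as immediate, and your counting argument (rank $2$ gives $|S_{p_s,m}| \geq 3f+1$ recorded sources, at most $f$ of which are Byzantine, leaving at least $2f+1$ correct processes that genuinely relayed $m$ and will confirm when queried, exactly meeting the $2f+1$ passability threshold) is precisely the intended reasoning. Your added care about authenticated channels guaranteeing that a correct member of $S_{p_s,m}$ really did relay $m$, and about Byzantine members being unable to block passability by denying, fills in the details the paper leaves implicit.
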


Using the interactive provable gradecast and the straightforward interactive variant of the admissibility proof we have: 
 \begin{theorem} Given $f$ Byzantine processes and $n$ processes in total, 
if $n \geq 4f+1$, then there exists a byzantine lattice agreement algorithm terminating in ${\cal O}(\log f)$ rounds in the authenticated channel model. 
\end{theorem}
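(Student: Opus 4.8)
The plan is to observe that the entire $O(\log f)$ construction of Sections~\ref{signatures}--\ref{sec:general} is oblivious to how the seen-all proof is implemented: signatures appear only inside the provable gradecast primitive, and every correctness argument reaches into that primitive through exactly one gateway, namely the guarantee of Observation~\ref{obs:byz}. Hence I would not re-prove anything about GAC$_{fast}$ or the arbitrary-lattice reduction of Algorithm~\ref{alg:final}; instead I would argue that swapping the signature-based provable gradecast for the interactive provable gradecast of this section leaves all of those arguments intact, and then account separately for the changes in resiliency and round complexity.

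First I would fix the building block. Under $n \ge 4f+1$ we run the interactive provable gradecast with the thresholds announced here: rank $1$ at multiplicity $\ge f+1$, and rank $2$ and relay at multiplicity $\ge 3f+1$. The four plain gradecast properties of Section~\ref{pgc:sec} are preserved because raising the rank-$2$ and relay thresholds only makes the delivery conditions stricter, while $3f+1 \le n-f$ keeps every correct process ranking an honest sender at $2$. The two properties that the rest of the paper actually consumes are then supplied by the two observations already stated: Observation~\ref{obs:byznosig} is the exact interactive analog of Observation~\ref{obs:byz} (a passable seen-all proof forces every correct process to have delivered the message with rank at least $1$), and Observation~\ref{obs:byzcorrnosig} supplies the converse needed on the honest side, that a correct process holding rank $2$ can always exhibit a passable proof, so that correct slaves and masters can still populate their admissibility proofs inside \textsc{updateproofs}.

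With these two facts in hand I would replay the correctness of GAC$_{fast}$ verbatim. Every lemma of Section~\ref{gac:correctness}---admissibility monotonicity (Lemma~\ref{lemma:nonincreasing}), master domination (Lemma~\ref{lemma:masterdominate}), the $|B_0| \le f$ bound behind non-triviality, ``once admissible forever'' (Lemma~\ref{lemma:onceisforever}), and the halving of the height (Lemma~\ref{lemma:expdecreasingf})---invokes the gradecast only through Observation~\ref{obs:byz}. Replacing that citation by Observation~\ref{obs:byznosig}, and using Observation~\ref{obs:byzcorrnosig} wherever a correct process must construct a proof from a rank-$2$ delivery, leaves each proof word-for-word correct. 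The arbitrary-lattice reduction of Section~\ref{sec:general} is a black-box use of GAC$_{fast}$, so it transfers unchanged, yielding comparability, inclusivity, and non-triviality on any join semi-lattice.

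The remaining obligation, and the only place where something genuinely new must be checked, is the round and resiliency accounting. The interactive check of a seen-all proof costs one query-and-reply exchange; since a proof is verified a constant number of times per epoch, each epoch still spans $O(1)$ rounds, and with $\log(f)+1$ epochs the total is $O(\log f)$. Resiliency drops to $n \ge 4f+1$ because for an honest sender to be ranked $2$ one needs $3f+1 \le n-f$, while the $2f+1$-confirmation verifier test must simultaneously guarantee that behind every passable proof sit at least $f+1$ honest relayers. I expect this threshold bookkeeping to be the main obstacle: one must confirm both that a rank-$2$ delivery (at least $3f+1$ relays) leaves at least $2f+1$ honest relayers so that a correct process always passes the verifier (Observation~\ref{obs:byzcorrnosig}), and that a passable $2f+1$-confirmation proof in turn leaves at least $f+1$ honest relayers (Observation~\ref{obs:byznosig}), which is exactly what makes the comparability and non-triviality arguments survive at the reduced resiliency.
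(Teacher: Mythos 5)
Your proposal is correct and follows essentially the same route as the paper: swap the signature-based seen-all proof for the interactive one (rank-$2$/relay thresholds raised to $3f+1$, verification by $2f+1$ confirmations), establish exactly the two observations the rest of the construction consumes (a passable proof forces rank $\geq 1$ everywhere, and a correct rank-$2$ receiver can always produce a passable proof), and then reuse GAC$_{fast}$ and the arbitrary-lattice reduction as black boxes. Your threshold bookkeeping ($3f+1$ relays leave $\geq 2f+1$ honest confirmers; $2f+1$ confirmations leave $\geq f+1$ honest relayers) matches the paper's proofs of Observations~\ref{obs:byznosig} and~\ref{obs:byzcorrnosig} exactly.
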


\paragraph{Message Complexity.}
We argue that the interactive provable gradecast generates at most ${\cal O}(n^2)$ messages at each round. The additional cost introduced by the interactive proof is at most ${\cal O}(n)$ per round: this is the number of messages that a process has to send to verify the admissibility of all values inside a set (we have to send at most one query message to each other process independently of how many values we want to check). Thus the total asymptotic cost remains the same: ${\cal O}(n^2 \log f)$ messages.

\section{An universal transformer from LA to Generalised LA }\label{bzgc:const}

 \cite{Faleiro:2012} introduced the Generalised Lattice Agreement problem. Such problem is essentially a version of LA in which processes are allowed to propose and decide on a, possibly infinite, sequence of values. The equivalent problem for Byzantine failures, in asynchronous system, has been defined in \cite{ipdps}. In this section we show a transformer algorithm that builds upon a generic ``one-shot" LA algorithm to create a Byzantine tolerant Generalised LA algorithm. 
 
\paragraph{Synchronous Generalised LA.} We consider the definition of \cite{ipdps} adapted for a synchronous system. 
In the Synchronous Generalised LA, each correct process $p_i$  receives input values from an infinite 
sequence $Pro_i=\langle pro_{0},pro_{1}, pro_{2},\ldots \rangle$ where each $pro_k$ is a value inside a set of admissible values $E$ (note that $E$ is not necessarily finite). Without loss of generality we imagine that at each round $r$, $p_i$ receives a value $pro_r \in Pro_i$ (note that this is not restrictive since we could modify the lattice to admit a neutral element, such as $\emptyset$).  A correct process $p_i$ must output an infinite number of decision values $Dec_i = \langle dec_{0},dec_{1}, dec_{2},\ldots \rangle$. 
The sequence of decisions has to satisfy the following properties: 

 \begin{itemize}
\item {\bf Liveness:} each correct process $p_i \in C$ performs an infinite sequence of decisions $Dec_{i}= \langle dec_{0},dec_{1}, dec_{2},\ldots \rangle$;
\item {\bf Local Stability:} For each $p_i \in C$  its sequence of decisions is non decreasing (i.e., $dec_{h} \subseteq dec_{h+1}$, for any $dec_{h} \in Dec_{i}$);
\item {\bf Comparability:} Any two decisions of correct processes are comparable, even when they happen on different processes; 
\item {\bf Inclusivity:} Given any correct process $p_i \in C$, if $Pro_i$ contains a value $pro_k$, then $pro_k$ is eventually included in $dec_{h} \in Dec_{i}$;
\item {\bf Non-Triviality:}  Given any correct process $p_i \in C$ if $p_i$ outputs some decision $dec_k$ at a round $r$, then 
 \\ $dec_k \leq \bigoplus (Prop[0:r] \cup B[0:g(r)])$, where, $Prop[0:r]$  is the union of the prefixes, until index $r$, of all sequences $Pro_i$ of correct processes; and, $B[0:g(r)]$ is the union of all prefixes, until index $r$,  of 
 $f$  infinite sequences $B_i$, one for each Byzantine process. Function $g$ is $g: \mathbb{N} \rightarrow \mathbb{N}$. Each $B_i$ is a sequence of elements in $E$. 
\end{itemize}

Intuitively, function $g$ upper bounds, to $f \cdot g(k)$ the number of values that Byzantine processes can insert up to decision $k$. 

\subsection{Transformer}
We now explain the high level idea behind the transformer. 
Let ${\cal LA}$ be a one shot synchronous lattice agreement algorithm that terminates in $\delta$ rounds. 
We divide the time in terms, a term lasts for $\delta$ rounds and it allows to execute, from start to termination, an instance of ${\cal LA}$. 
At the beginning of term $k$, correct processes start the $k$-th instance of ${\cal LA}$, we denote it as  $k$-${\cal LA}$. Each correct process receives from upper layer a stream of elements in $E$, and it batches such elements until a new instance of ${\cal LA}$ starts. Let $C_{k}$ be the $k$-th batch, at the  beginning of term $k$, process $p_i$ starts the instance $k$-${\cal LA}$ with input $(p_i, dec_{k-1} \oplus C_{k})$, where $dec_{k-1}$ is the output of the $(k-1)$-${\cal LA}$ instance.  
There are few minor details to add to this description to get an actual algorithm (see pseudocode in Algorithm \ref{trans:algorithm}), the most important being the mechanism needed to bound the number of values that could be added by Byzantine processes (recall that we have to upper bound their values by a function $g$). The key idea is to start the instance $k$-${\cal LA}$ with a set of admissible values $ P_{T(k-1)+\delta}$, that is the set of all subsets of $E$ of size less or equal to $T(k-1)+\delta$.  Function $T$ is a function mapping each index of the decision sequence of a correct process to an upper bound on the maximum size of the decision, where the size is counted as number of elements in the decision. We assume $T(-1)=0$; we have that $T(0)=\delta \cdot n$: each correct process that starts the first instance of LA proposes at most $\delta$ values; the closed form for $k>0$ of $T(k)$ is in the statement of Lemma \ref{lemma:transf}.

\paragraph{Correctness discussion.}

Assuming that each  ${\cal LA}$ is an instance of a correct LA algorithm (according to definition in Section \ref{definition:la}), we argue that the Generalised LA algorithm obtained by the transformer described above is a correct algorithm according to definition in Section \ref{bzgc:const}. 
The liveness property is satisfied by the liveness of each instance of LA. The local stability and the inclusivity derive directly from the fact that once a process $p$ outputs $dec_{k-1}$, the next instance of LA will have as input a value that contains $(p, dec_{k-1} \oplus C_{k})$. Therefore, by inclusivity of LA we have that the decision of  $k$-${\cal LA}$ contains the pair $(p, dec_{k-1} \oplus C_{k})$, and this means that $ dec_{k-1} \oplus C_{k} \leq dec_{k}$. It remains to show the non-triviality:

\begin{Lemma}\label{lemma:transf}
Consider the sequence of decisions of a correct process $p$ executing the transformer in Algorithm \ref{trans:algorithm}. This sequence respects the Non-Triviality property for a function $g(r)< (T(k)=\frac{\delta \cdot n ((f+1)^{k+1}-1)}{f})$
\end{Lemma}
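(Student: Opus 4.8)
The plan is to bound the \emph{size} of the decision $dec_k$ produced by the $k$-th instance $k$-${\cal LA}$, where size means the number of atomic elements of $E$ it contains, derive a recurrence for this bound $T(k)$, solve the recurrence in closed form, and finally translate the size bound into the Non-Triviality guarantee by distributing the Byzantine-contributed atoms among the $f$ Byzantine sequences. First I would set up the recurrence. Applying the one-shot Non-Triviality property of Section~\ref{definition:la} to instance $k$-${\cal LA}$, its decision satisfies $dec_k \leq \bigoplus(X_k \cup B_k)$, where $X_k$ is the union of the correct proposals and $B_k$ is a set of at most $f$ Byzantine-injected values, each of which must be \emph{admissible}, i.e. belong to $P_{T(k-1)+\delta}$. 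I would bound the two contributions separately. A correct proposal has the form $dec_{k-1}^{(i)} \oplus C_k^{(i)}$; by Comparability the prior decisions $dec_{k-1}^{(i)}$ are totally ordered, so their union is the largest one, of size at most $T(k-1)$, while the fresh batches $C_k^{(i)}$ contribute at most $(n-f)\delta$ atoms (at most $\delta$ per correct process over the $\delta$ rounds of the term). Hence $|X_k| \leq T(k-1) + (n-f)\delta$. The crucial point is the Byzantine term: since each of the $\leq f$ injected values is admissible, it is a subset of $E$ of size at most $T(k-1)+\delta$, so it carries at most $T(k-1)+\delta$ atoms, giving $|B_k| \leq f\,(T(k-1)+\delta)$. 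Summing, $|dec_k| \leq T(k-1)+(n-f)\delta + f(T(k-1)+\delta) = (f+1)T(k-1) + n\delta$, which yields the recurrence $T(k) = (f+1)T(k-1) + n\delta$ with $T(-1)=0$ (equivalently $T(0)=n\delta$, matching the text).

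Next I would solve this first-order linear recurrence. Its homogeneous part is $(f+1)^k$ and a constant particular solution is $-n\delta/f$; fitting $T(-1)=0$ gives the claimed closed form $T(k) = \frac{n\delta}{f}\big((f+1)^{k+1}-1\big)$, which I would confirm by a one-line induction against the recurrence.

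Finally I would derive Non-Triviality. Every atom of $dec_k$ that is not a value proposed by some correct process in rounds $0$ through $r$ must have been injected by a Byzantine process in one of the instances $0,\dots,k$; call the set of such atoms $N_B(k)$. Since $N_B(k) \subseteq dec_k$ we have $|N_B(k)| \leq T(k)$. Because the Non-Triviality definition only requires the \emph{existence} of $f$ sequences $B_1,\dots,B_f$ whose prefixes cover these atoms, I would distribute $N_B(k)$ arbitrarily among the $f$ sequences, placing at most $\lceil T(k)/f\rceil$ atoms in each; taking $g(r) = \lceil T(k)/f\rceil$, with $k$ the term index corresponding to round $r$ (so $r = \delta(k+1)$), then guarantees $dec_k \leq \bigoplus(Prop[0:r] \cup B[0:g(r)])$ while keeping $g(r)$ at most $T(k)/f$, hence below $T(k)$, as claimed.

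The main obstacle is pinning down the recurrence constant correctly: one must invoke Comparability to avoid summing the prior decisions across the correct processes, and, most importantly, exploit the admissibility restriction to $P_{T(k-1)+\delta}$ to bound the \emph{atomic weight} of each Byzantine value. It is precisely this per-value size cap that turns the additive $f$ of one-shot LA into the multiplicative $(f+1)$ factor and produces the exponential growth of $T$; everything after the recurrence is routine algebra together with the counting/assignment argument for $g$.
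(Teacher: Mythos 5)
Your proof is correct and follows essentially the same route as the paper's: the identical recurrence $T(k) \leq (f+1)\,T(k-1) + n\delta$ obtained by bounding correct contributions by $T(k-1)+(n-f)\delta$ (via comparability of prior decisions) and Byzantine contributions by $f\bigl(T(k-1)+\delta\bigr)$ via the admissibility restriction to $P_{T(k-1)+\delta}$, then solving to the stated closed form. If anything, your write-up is slightly more complete than the paper's, which leaves the recurrence unsolved and dismisses the final translation into the function $g$ with ``clearly,'' whereas you make explicit both the solution of the recurrence and the distribution of the at most $T(k)$ Byzantine-injected atoms among the $f$ sequences.
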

\begin{proof}
We have to upper bound the number of different values of $E$ that could be inserted by byzantines in the decision $dec_k$ of a correct process. Without loss of generality we suppose that $E$ is infinite. 
We will upper bound the total number $T(k)$ of elements in $E$ that could be in a decision $dec_k$ ($dec_k$ is the $k$-th elements of the sequence of decisions $Dec_i = \langle dec_{0},dec_{1}, dec_{2},\ldots \rangle$). Clearly, this upper bound is an upper bound on $g(k)$.
When $k=0$ we have that each process is allowed to propose a $(id,Y)$ where $|Y|=\delta$, by the Non-Triviality of LA we have $|dec_{0}| \leq n\delta$. When $k > 0$ we have the following recurrence relation on $T$ that is $T(k) \leq f \cdot T(k-1) +T(k-1) + n \delta$, where: the term $(n -f)\delta$ accounts for the new $\delta$ elements that each correct process may add (the $C$ sets); the term $T(k-1)$ account for the fact that each correct process proposes also its previous decision (that has maximum size $T(k-1)$); and finally the term $f \cdot T(k-1)+ f \delta$ accounts for the fact that each byzantine could propose a set of new elements of size $T(k-1)+\delta$. Essentially, a Byzantine could fake its decision $dec_{k-1}$ to be any subset of $E$ of size at most $T(k-1)$.
 \end{proof}


\begin{algorithm*}
\caption{{\em Transformer}  - Algorithm for process $p_i$} \label{trans:algorithm}
\footnotesize

\begin{algorithmic}[1]
\algrenewcommand\algorithmicprocedure{\textbf{upon event}}
\State $C=\emptyset$ \Comment{Batch of values}
\State $dec=\emptyset$ \Comment{Decision}

\Procedure{propose}{$pro_r$}

\State $C=C \cup \{pro_r\}$

\EndProcedure

\Procedure{ $r = \delta \cdot (k+1)$ for some $k \in \mathbb{N}$}{}
\State Start the instance $k$ of $LA$ over lattice ${\cal L}_{k}$ with admissible values $E_{k}=\Pi \times P_{T(k-1)+\delta}$ 
\State {\sc $k$-LA-Propose}$((p_i, dec \oplus C))$
\State $C=\emptyset$
\EndProcedure

\Procedure{decision from the current instance of LA}{$dec'$}
\State $X:\{y | (x,y) \in dec'\}$
\State $dec=\bigoplus X$ 
\State Decision$_k(dec)$
\EndProcedure

\end{algorithmic}
\end{algorithm*}

\bibliographystyle{plain}
\bibliography{bibliography.bib}

\end{document}